\begin{document}
\hfuzz=3pt

\title{\ATLsc with partial observation\thanks{This work was partly supported
    by ERC Starting grant EQualIS (FP7-308087) and by FET project Cassting
    (FP7-601148).}}
\author{Fran\c cois~Laroussinie
\institute{LIAFA \\ Univ. Paris Diderot \& CNRS, France}
\and 
 Nicolas~Markey
\institute{LSV\\ ENS Cachan \& CNRS, France}
\and 
 Arnaud~Sangnier
\institute{LIAFA \\ Univ. Paris Diderot \& CNRS, France}
}
\def\authorrunning{Fran\c cois Laroussinie, Nicolas Markey, Arnaud Sangnier}
\def\titlerunning{\ATLsc with partial observation}
\date{\today}

\maketitle

\begin{abstract}
Alternating-time temporal logic with strategy contexts (\ATLsc) is a powerful
formalism for expressing properties of multi-agent systems: it~extends \CTL
with \emph{strategy quantifiers}, offering a convenient way of expressing both
collaboration and antagonism between several agents. Incomplete observation of
the state space is a desirable feature in such a framework, but it quickly
leads to undecidable verification problems. In this paper, we prove that
\emph{uniform} incomplete observation (where all players have the same
observation) preserves decidability of the model checking problem, even for
very expressive logics such as \ATLsc. 

\end{abstract}

\section{Introduction}
Model checking is a powerful technique for automatically checking properties
of computerized systems~\cite{Pnu77,CE82,QS82a}. Model-checking
  algorithms classically take
as input a model of the system under analysis (e.g.~a~finite-state automaton),
and a formal property (expressed e.g. in some temporal logic, such as \LTL
or~\CTL) to be checked; they then automatically and exhaustively verify whether
the set of behaviors of the model satisfies the property.

During the last 15 years, model checking has been extended to handle complex
systems, whose behaviors are the result of the interactions of several
components. Games played on graphs are a convenient model for representing
such interactions, and temporal logics have been hence proposed in
  order to express relevant
properties in such a setting. One of the specification language to
  navigate in the execution trees of multi-agents systems is the temporal
  logic \ATL~\cite{jacm49(5)-AHK} (Alternating-Time Temporal Logic); it is an
  extension of the branching
  temporal logic \CTL which allows to express properties such as the fact that a component
can enforce a certain behavior independently of the actions performed
 by the other components. \ATL has then be enriched in different ways
  to obtain more expressive logics for multi-agent
systems. In particular, \ATLsc (\ATL with strategy contexts)~\cite{BDLM09,LM15} and
Strategy Logic~\cite{CHP07b,MMV10a} are two powerful extensions with similar
properties in terms of expressive power and algorithmic properties.
It~was furthermore proved that those two logics have decidable, but
\ComplexityFont{Tower}-complete model-checking algorithms. 

In the approaches cited above, it~is always assumed that all the players in
the games have perfect observation of the state of the game, and that they
also have perfect recall of the sequence of states that have been visited. In
other words, they can choose an action to perform based on the entire sequence
of states visited before. However, in many applications,
components only have bounded memory, and most often they do not have the
ability to fully observe all the other components of the system. While
considering imperfect recall---the~hypothesis that each player can only
store into a finite memory the history of the seen states seen---can~greatly
simplify verification algorithms (since the number of strategies in the
systems becomes finite), partial observation is known to make \ATL model
checking undecidable~\cite{jacm49(5)-AHK,DT11}. Such results obviously carry
over to more expressive logics like~\ATLsc. Decidability can be regained
by restricting to imperfect-recall strategies~\cite{Sch04a}, or by considering 
hierarchical information~\cite{BMV15} or 
special communication architectures in distributed synthesis~\cite{KV01b,Sch08b}.

In this paper, we consider a restricted case of partial observation, where all
the players have the same information about the state space. We call such a
case \emph{uniform} partial observation. We~prove that under this hypothesis,
model checking concurrent game structure is decidable, even for the powerful
logic~\ATLsc.
In~particular, it~is decidable whether there exists a strategy, based only on
a subset of atomic propositions (assuming that the precise states and the
other propositions are not visible),
to enforce a given property. Note also that the restriction to uniform
observation is not significant when one looks for a strategy of a single
agent~$a$ against all other players, since the semantics we use
requires that
the strategy be winning for any outcome (hence the exact observation of the
opponent players is irrelevant).
%
The decidability proof for model-checking under uniform partial observation is
obtained by adapting a previous approach developed in~\cite{DLM12,LM15}, which
consists in transforming the model-checking problem for \ATLsc* into a
model-checking problem for \QCTL, an extension of \CTL with propositional
quantification. A~similar technique also allows us to prove that when
restricting the strategy quantifiers to range over memoryless strategies, then
the model-checking problem for \ATLsc* with partial observation is again
decidable. We~finally prove that satisfiability checking for~\ATLsc with
partial observation (i.e., deciding whether there exists a game structure with
partial observation satisfying a given formula of~$\ATLsc*$) is undecidable,
even in the case of turn-based games (where satisfiability is decidable under
full observation~\cite{LM13}).



\section{Definitions}
\subsection{Game structures with partial observation}

In this paper, we consider concurrent games with partial observation. They
correspond to classical concurrent game structures~\cite{jacm49(5)-AHK} where, for
each agent, an equivalence relation over the states of the structure defines
sets of states that are observationally equivalent for this player.
Observation equivalence extends to sequences of states in the obvious way.
The~strategies of the agents then have to be compatible with their
observation, in the sense that after two observationally equivalent plays,
a~strategy has to return the same action. In~this preliminary section,
we~formalize this setting.

All along this paper, we consider a set $\AP$ of atomic propositions.
We~recall that a \emph{Kripke structure} over~\AP is a tuple
$\tuple{Q,R,\ell}$ where $Q$ is a countable set of states, $R \subseteq Q
\times Q$ is the transition relation and $\ell\colon Q \to 2^{\AP}$ is a
state-labeling function.

\begin{definition}\label{def-CGSo}
A \emph{concurrent game structure with partial observation}~(\emph{\CGSO{}})
is a tuple $\calC=\tuple{Q, R, \ell, \Agt, \Alac, \Chc, \Edg,(\sim_a)_{a \in
    \Agt}}$ where: $\tuple{Q,R,\ell}$ is a finite-state Kripke structure;
$\Agt=\{a_1,\ldots,a_p\}$ is a finite set of \emph{agents}
(or~\emph{players}); $\Alac=\{m_1,\ldots,m_r\}$~is a finite set of moves
(or~\emph{actions}); $\Chc\colon Q \times\Agt \to
2^\Alac\smallsetminus\{\varnothing\}$ defines the set of available moves of
each agent in each state; $\Edg\colon Q \times \Alac^\Agt \to Q$ is a
transition table associating, with each state~$q$ and each set of moves of the
agents, the resulting state~$q'$, with the requirement that $(q,q')\in R$;
finally, $(\sim_a)_{a \in \Agt}$~assigns to each player an equivalence
relation over~$Q$.
\end{definition}

In the following, we assume w.l.o.g. (w.r.t~existence of specific strategies)
that $\Chc(q,a)=\Alac$ for every $q \in Q$ and $a \in \Agt$: all the actions
are available to all the players at any time. A~move vector is a vector
$\vect{m} \in \Alac^\Agt$; for such a vector and for an agent $a \in \Agt$, we
denote by $\vect{m}[a]$ the move of agent~$a$ in~$\vect{m}$.
A~\newdef{turn-based game structure with partial observation}~(\emph{\TBGSO})
is a \CGSO for which there exists a mapping $\Own\colon Q \to \Agt$ such that
for any~$q\in Q$ and for any two move vectors~$\vect{m}$ and~$\vect{m}'$
if~$\vect{m}[\Own(q)]=\vect{m}'[\Own(q)]$, then $\Edg(q,\vect{m}) =
\Edg(q,\vect{m}')$. 



As we mentioned above, each relation $\sim_a$ with $a \in \Agt$
characterizes the observation of the agent~$a$ in the game
structure: if~$q \sim_a q'$, then agent~$a$
is not able to make a distinction between the states~$q$ and~$q'$;
in~such a case, we~say that $q$ and~$q'$ are \emph{$a$-equivalent}. 
As~an important special case, an observation relation $(\sim_a)_{a \in \Agt}$
is said to be \emph{uniform} when $\mathord{\sim_a}=\mathord{\sim_{a'}}$
for all $a,a'\in\Agt$. 
In that case we might substitute the set $(\sim_a)_{a \in
  \Agt}$ by a unique equivalence relation~$\sim$.

A \emph{finite path} in the \CGSO is a finite non-empty sequence of states
$\rho=q_0 q_1 q_2 \ldots q_k$ such that $(q_i,q_{i+1}) \in R$ for all $i \in
\{0,\ldots,k-1\}$. 
An~\emph{infinite path} (or~run) is an infinite sequence of states
such that each finite prefix is a finite path. We~denote by~$\Path$
(resp.~$\InfPath$) the set of finite (resp.~infinite) paths. Let~$|\rho|$ the
length of the path~$\rho$ (with~$|\rho|=\infty$ if $\rho$~is infinite). For~$0
\leq i < |\rho|$, we~write~$\rho(i)$ to represent the $i+1$-th element of the
path~$\rho$. For~a path~$\rho$,
we write $\first(\rho)$ for its first element~$\rho(0)$ and, when
$|\rho|<\infty$, we~write $\last(\rho)$ for its last element $\rho(|\rho|)$.
For~$0 \leq i < |\rho|$, we~denote by
$\rho_{\leq i}$ the prefix of~$\rho$ until position~$i$, i.e.~the~finite path
$\rho(0) \rho(1) \ldots \rho(i)$. We~extend the equivalence relation $\sim_a$
for $a \in \Agt$ to paths as follows : two~paths $\rho$ and~$\rho'$ are
$a$-equivalent (written~$\rho \sim_a \rho'$) if, and only~if, $|\rho|=|\rho'|$
and $\rho(i)\sim_a \rho'(i)$ for every $0\leq i < |\rho|$.

Given a \CGSO~$\calC$ and one of its states~$q_0$, we~write $\calT_\calC(q_0)$
for the \emph{execution tree} of~$\calC$ from~$q_0$: formally,
$\calT_\calC(q_0)$~is the pair~$\tuple{T,\ell}$ where $T$ is the set of all
finite paths (called \emph{nodes} in the context of trees) in~$\calC$ with
first state~$q_0$, and $\ell$ labels each node~$\rho$ with the labeling
of~$\last(\rho)$ in~$\calC$. It~will be convenient in the sequel to see
execution trees as infinite-state Kripke structures. To~alleviate notations,
we~still write $\calT_\calC(q_0)$ for the Kripke structure $\tuple{T, R,
  \ell}$ where $\tuple{T,\ell}$ is the tree defined above, and $R\subseteq
T\times T$ is the transition relation such that $(\rho,\rho')\in R$ whenever
$\rho$~is the prefix of~$\rho'$ of length~$|\rho'|-1$.

A \emph{strategy} for agent $a \in \Agt$ is a function $f_a \colon \Path \to
\Alac$; it~associates with any finite path a move to be played by agent~$a$
after this path. A~strategy~$f_a$ for agent~$a \in \Agt$ is said to be
\emph{memoryless} whenever for any two finite paths $\rho$ and~$\rho'$ such
that $\rho(|\rho|-1)=\rho'(|\rho'|-1)$, it~holds $f_a(\rho)=f_a(\rho')$. Hence
the decision of a memoryless strategy depends only on the current control
state; for this reason, we may simply give such a strategy as a fonction
$f_a\colon Q \to \Alac$. A~strategy~$f_A$ for a coalition of agents $A
\subseteq \Agt$ is a set of strategy $\{f_a\}_{a\in A}$ assigning a
strategy~$f_a$ to each agent $a \in A$ (note that a strategy for agent~$a$ is
equivalent to a strategy for coalition~$\{a\}$). Given a strategy
$f_A=\{f_a\}_{a\in A}$ for coalition~$A$, we~say that a path~$\rho$
respects~$f_A$ from a finite path~$\pi$ if, and only~if, for all $0 \leq i <
|\pi|$, we have $\rho(i)=\pi(i)$ and for all $|\pi| \leq i < |\rho|-1$, we
have that $\rho(i+1)= \Edg(\rho(i),\vect{m})$ where $\vect{m}$ is a move
vector satisfying $\vect{m}(a)=f_a(\rho_{\leq _i})$ for all $a \in A$. Given a
finite path $\pi$, we~denote by $\Out(\pi,f_A)$ the set of infinite
paths~$\rho'$ such 
that $\rho$ respects the strategy~$f_A$ from~$\pi$. Given a strategy
$g_A=\{g_a\}_{a \in A}$ for a coalition~$A$ and a strategy $f_B=\{f_b\}_{b \in
  B}$ for a coalition~$B$, we denote by $g_A\compo f_B$, the~strategy
$\{h_c\}_{c \in A \cup B}$ for coalition $A \cup B$ such that $h_c=g_c$ for
all $c \in A$ and $h_c=f_c$ for all $c \in B \setminus A$. Finally given a
$f_B=\{f_b\}_{b \in B}$ for a coalition~$B$ and a set of agents $A \subseteq
\Agt$, we~denote by $(f_B)_{| A}$ (resp.~$(f_B)_{\smallsetminus A}$) the
strategy $\{f_b\}_{b\in B \cap A}$ (resp.~$\{f_b\}_{b\in B \setminus A}$) for
coalition $B \cap A$ (resp.~$B \setminus A$).

Partial observation comes into the play by restricting the space of allowed
strategies: 
in our setting, we only consider strategies that are \emph{compatible} with
the observation in the game, which means that after any two $a$-equivalent
finite paths~$\rho$ and~$\rho'$, the~strategies for agent~$a$ have to take
the same decisions (i.e.~$f_a(\rho)=f_a(\rho')$).
We~could equivalently define a compatible strategy for~$a$ as a function from
the quotient set $\Path_{/\sim_a}$ to~$\Alac$, such that if $[\rho]$ is the
equivalence class of~$\rho$ with respect to~$\sim_a$, then $f_a([\rho])$
gives the move to play for~$a$ from any history equivalent to~$\rho$.
A~strategy $f_A=\{f_a\}_{a \in A}$ for coalition $A$ is \emph{compatible} if
$f_a$ is compatible for all $a \in A$. 
We~write $\Strat(A)$ to denote the unrestricted strategies for coalition~$A$,
$\StratS(A)$~for the set of compatible strategies, and $\StratSZ(A)$ is the
set of compatible memoryless strategies for~$A$.

\subsection{\ATL with strategy contexts}
We will be interested in the logic \ATLsc, which extends the alternating-time
temporal logic of~\cite{jacm49(5)-AHK} with strategy contexts. We assume a fixed set
of atomic propositions~$\AP$ and a fixed set of agents~$\Agt$.
\begin{definition}\label{def-ATLSes}
  The formulas of~\ATLsc* are defined by the following grammar:
  \begin{xalignat*}1 
    \ATLsc* \ni \phis,\psis \coloncolonequals & P \mid
    \non\phis \mid \phis\ou\psis
    \mid \Diams[A] \phip \mid \Diamsco[A] \phip \mid 
     \Relax[A] \phip \mid  
     \Relaxco[A] \phip 
    \\
    \phip,\psip \coloncolonequals & \phis \mid \non\phip \mid \phip\ou\psip
    \mid \X\phip\mid \phip\Until\psip
\end{xalignat*}
where $P$~ranges  over~$\AP$ and $A$~ranges over~$2^\Agt$.
\end{definition}


We interpret \ATLsc* formulas over \CGSOs, within a context (\ie,
a~preselected strategy for a coalition): \newdef{state formulas} of the
form~$\phis$ in the grammar above are evaluated over states, while
\newdef{path formulas} of the form~$\phip$ are evaluated along infinite paths.
In~order to have a uniform definition, we~evaluate all formulas at a given
position along a path.

In~\ATLsc*, contrary to the case of classical~\ATL~\cite{jacm49(5)-AHK},
when strategy quantifiers assign new strategies to some players, the other
players keep on playing their previously-assigned strategies. This~is what
``strategy contexts'' refer~to. Informally, formula~$\Diams[A]\phip$ holds at
position~$n$ along~$\rho$ under the context~$F$ if it~is possible to
extend~$F$ with a \emph{strategy} for the coalition~$A$ such that the outcomes
of the resulting strategy after~$\rho_{\leq n}$ all satisfy~$\phip$.
In~Section~\ref{sec-mem}, we~also consider the strategy quantifiers
$\Diamsm[A]$, which quantifies only on \emph{memoryless} strategies.
Finally strategies can
be dropped from the context using the operator~$\Relax$. 
Notice that in this paper, all strategy quantifiers are restricted to
compatible strategies.

We~now define the semantics formally. Let~$\calC$ be a \CGSO with agent
set~$\Agt$, $\rho$~be an infinite path of~$\calC$, and $n\in\bbN$. 
Let~$B\subseteq\Agt$ be a coalition, and
$f_B\in\StratS(B)$. That a (state or~path) formula~$\phi$ holds at a
position~$n$ along~$\rho$ in~$\calC$ under strategy context~$f_B$, denoted
${\calC, \rho, n \models_{f_B}\phi}$, is defined inductively as follows
(omitting atomic propositions and Boolean operators):
\begin{xalignat*}{3}
  \calC, \rho, n   &\models_{f_B}  \Relax[A] \phis &&\quad\mbox{ iff }\quad 
    &&\calC, \rho, n \models_{(f_B)_{\smallsetminus A}} \phis \\
\calC, \rho, n &\models_{f_B}  \Relaxco[A] \phis &&\quad\mbox{ iff }\quad 
    &&\calC, \rho, n \models_{(f_B)_{| A}} \phis \\
\noalign{\pagebreak[1]}
  \calC, \rho, n &\models_{f_B}  \Diams[A] \phip && \quad\mbox{ iff }\quad   
     &&\exists g_A\in\StratS(A). \
     \forall \rho'\in\Out(\rho_{\leq n},g_A\compo f_B).\ \calC,
     \rho',n \models_{g_A\compo f_B} \phip \\
\noalign{\pagebreak[1]}
  \calC, \rho, n &\models_{f_B}  \Diamsco[A] \phip && \quad\mbox{ iff }\quad   
  &&\calC, \rho, n \models_{f_B}  \Diams[\Agt\setminus A] \phip \\
%
%
%
\calC, \rho,n &\models_{f_B}  \X \phip &&\quad\mbox{ iff }\quad 
&& \calC, \rho,n+1 \models_{f_B} \phip \\ 
\calC, \rho,n  &\models_{f_B}  \phip \Until \psip && \quad\mbox{ iff }\quad
 &&\exists l\geq 0.\ 
  \calC, \rho, n+l \models_{f_B} \psip  \mbox{ and }
  \forall 0\leq m < l.\  \calC, \rho, n+m \models_{f_B} \phip 
\end{xalignat*}


Finally, we~write $\calC, q_0\models \phis$ when
$\calC, \rho , 0\models_{f_\emptyset}\phis$ (with empty context) for
all path $\rho$ such that $\rho(0)=q_0$. 
Notice that this does not depend on the choice of~$\rho$.
%
The usual shorthands such as $\F$ and~$\G$ are defined as for \CTL*. It~will
also be convenient to use the constructs $\Boxs[A] \phip$ as a shorthand for
$\non\Diams[A]\non\phip$, and $\Diams[A] \phis$ as a shorthand for $\Diams[A]
\bot \Until \phis$.
The \CTL* universal path quantifiers $\Ex$ and~$\All$ can be expressed
in~\ATLsc*. For instance, 
$\All \phip$ is equivalent to~$\Relaxco[\emptyset] \Diams[\emptyset] \phip$.
Finally the fragment \ATLsc of \ATLsc* is defined as usual, by restricting the
set of path formulas to 
\[
  \phip,\psip   \coloncolonequals  \non\phip \mid \X\phis \mid  \phis\Until\psis. 
\]
%


\begin{remark}
The strategy quantifiers for complement coalitions (namely
$\Diamsco[A]$ and $\Relaxco[A]$) are only useful in case the set $\Agt$ is not
known in advance, as it is the case when dealing with satisfiability or with
expressiveness questions.
\end{remark}

\begin{remark}
As opposed to Strategy Logic~\cite{CHP07b,MMV10a}, the (existential) strategy
quantifiers in \ATLsc include an implicit (universal) quantification over the
outcomes of the strategies of the unassigned players. This is indeed a
quantification over all the outcomes, and not over the outcomes that would
result only from compatible strategies.
\end{remark}

\begin{example}
Consider the CGSO $\calC$ in Figure~\ref{fig-cgs}. It is a turn-based CGS with
two players (Player $a_1$ plays in circle node, and Player $a_2$ plays in box
node). The set $\Alac$ is $\{1,2,3\}$. The partial observation for Player
$a_1$ is then given by $\sim_{a_1}$ is $\equiv_{P}$ (\ie\ two states are
equivalent if, and only~if, the truth value of $P$ is the same in both
states).
Now consider the formula $\phi=\Diams[a_1] \F f$. To see that $\phi$
holds for true in $q_0$ with the standard semantics (where the strategy
quantifiers are not restricted to compatible strategies), it is sufficient to
consider the (memoryless) strategy for $a_1$ consisting in choosing the
move~$m_1$ from~$q_2$ and the move~$m_2$ (or~$m_3$) from~$q_3$. Note that this
strategy is not compatible (after $q_0q_2$ and $q_0q_3$, agent~$a_1$ should choose the
same move), thus we need to consider another one to show that $\phi$ is
satisfied with the partial-observation semantics; for example, the strategy
consisting in choosing~$m_1$ after~$q_0q_2$ and~$q_0q_3$, $m_2$~after~$q_0q_1q_3$
and~$m_1$ after~$q_0q_3q_2$. If~furthermore we look for a memoryless compatible strategy
for~$a_1$, we~can use the strategy assigning the move~$m_1$ to~$q_2$ and~$q_3$,
hence the formula $\Diamsm[a_1] \F f$ also holds true in~$q_0$. Therefore,
formula~$\phi$ holds true at~$q_0$. On~the other hand, formula
$\Diamsm[a_1] \X \X \X f$ does
not hold true in~$q_0$, but formula $\Diams[a_1] \X \X \X f$, where we drop
the memory constraint, does hold true.
\end{example}

\begin{figure}[!ht]
\centering
\begin{tikzpicture}
\tikzstyle{every node}=[rectangle]
\draw (1,2.8) node[style=carre,style=vert] (q0) {}
  node[above=3mm] {$\scriptstyle q_0$};
\draw (1,1.2) node[style=carre,style=vert] (q1) {}
  node[below=3mm] {$\scriptstyle q_1$};

\draw[rounded corners=6mm,fill=black,opacity=.15] (3.4,2) |- (4.6,3.4) |- (3.4,.6) -- (3.4,2);
\draw (4,2.8) node[style=rond,style=bleu] (q2) {$P$}
  node[above=3mm] {$\scriptstyle q_2$};
\draw (4,1.2) node[style=rond,style=bleu] (q3) {$P$}
  node[below=3mm] {$\scriptstyle q_3$};
\draw (7,2.8) node[style=rond,style=vert] (q4) {}
  node[above=3mm] {$\scriptstyle q_4$};
\draw (7,1.2) node[style=rond,style=rouge] (q5) {$f$}
  node[below=3mm] {$\scriptstyle q_5$};

\tikzstyle{every node}=[rectangle]
\foreach \nod / \angle / \lab in
   {q4/0/,q5/0/}
    {\draw[arrows=-latex'] (\nod.\angle-30)
        .. controls +(\angle-30:6mm) and +(\angle+30:6mm)
        .. (\nod.\angle+30) node[pos=.5,right=2pt,above=3pt]
      {\hbox to 0pt{$\scriptstyle\listof{\lab}$\hss}}; }
\foreach \noda / \nodb / \lab / \dec in
    {
      q0/q1/m_1/left,q0/q2/m_3/above,q0/q3/m_2/above right,q1/q3/m_1;\ m_2;\
      m_3/below,q3/q2/m_1/left,q2/q4/m_2;\ m_3/above,q2/q5/m_1/above right,q3/q5/m_2;\ m_3/below}
    {\draw[arrows=-latex'] (\noda) -- (\nodb) node[midway,\dec]
      {$\scriptstyle {\lab}$};}
\path (q0) -- +(-.6,.6) node {$\mathcal C$};
\end{tikzpicture}
\caption{The turn-based \CGS $\calC$}
\label{fig-cgs}
\end{figure}

In the sequel we will consider the \emph{model-checking problem} of  \ATLsc*
over \CGSOs which takes as input a \CGSO $\calC$, a control state $q_0$
and an \ATLsc* formula $\phi$ and which answers yes if $\calC,
q_0\models \phi$ and no otherwise. We will also consider the
\emph{satisfiability problem} which given a formula of  \ATLsc*
determines whether there exists a \CGSO $\calC$ and a control state
$q_0$ such that  $\calC,q_0\models \phi$. We recall that when
considering concurrent game structures (with full observation), \ATLsc*
model-checking is decidable, but its satisfiability problem is undecidable
(except when restricting to turn-based games)~\cite{LM15}.

\subsection{From concurrent games to turn-based games}
Following our definitions, turn-based game structures can be
seen as special cases of concurrent game strcutres, where in each location
only one player may have several non-equivalent moves.
\begin{figure}[t]
\centering
\begin{tikzpicture}
\begin{scope}
\draw (0,0) node[rond,vert] (a) {$q_0$};
\draw (-2,-2.6) node[rond,vert] (b1) {$q_1$};
\draw (0,-2.6) node[rond,vert] (b2) {$q_2$};
\draw (2,-2.6) node[rond,vert] (b3) {$q_3$};
\everymath{\scriptstyle}
\path (a) edge[-latex'] node[left] {$\tuple{m_1,m_1}$} (b1);
\path (a) edge[-latex',bend left] node[pos=.4,left] {$\tuple{m_2,m_1}$}
  node[pos=.6,left] {$\tuple{m_1,m_2}$} (b2);
\path (a) edge[-latex'] node[right] {$\tuple{m_2,m_2}$} (b3);
\end{scope}
\begin{scope}[xshift=7cm]
\draw (0,0) node[rond,vert] (a) {$q_0$};
\draw (-2,-2.6) node[rond,vert] (b1) {$q_1$};
\draw (0,-2.6) node[rond,vert] (b2) {$q_2$};
\draw (2,-2.6) node[rond,vert] (b3) {$q_3$};
\fill[black,opacity=.15,rounded corners=3mm] (0,-.8) -| (1.5,-1.8) -|
  (-1.5,-.8) -- (0,-.8);
\draw (-1,-1.3) node[carre,jaune] (i1) {\phantom{$q_2$}};
\draw (1,-1.3) node[carre,jaune] (i2) {\phantom{$q_3$}};
\everymath{\scriptstyle}
\path (a) edge[-latex'] node[left] {$m_1$} (i1);
\path (a) edge[-latex'] node[right] {$m_2$} (i2);
\path (i1) edge[-latex'] node[left] {$m_1$} (b1);
\path (i1) edge[-latex'] node[left] {$m_2$} (b2);
\path (i2) edge[-latex'] node[right] {$m_1$} (b2);
\path (i2) edge[-latex'] node[right] {$m_2$} (b3);
\end{scope}
\end{tikzpicture}
\caption{Turning a \CGSO into an equivalent \TBGSO}
\label{fig-cgs2tb}
\end{figure}
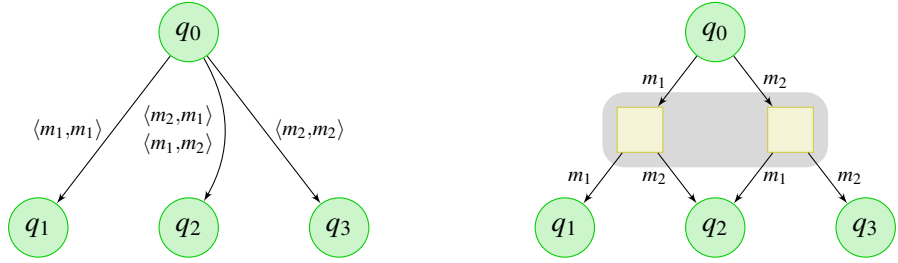
In this section, we prove that any partial-observation \CGSO can be turned
into an \emph{equivalent} partial-observation \TBGSO (where \emph{equivalent}
will be made precise later). While all players play
at the same time in a \CGSO, they play one after the other (in
any predefined order) in the correspondig \TBGSO, but the intermediary states
are made 
undistinguishable to all players, so that no player can gain information
from playing after another one. 
Figure~\ref{fig-cgs2tb} schematically represents this transformation in the
case of two players. 
Obviously, since we add intermediary states,
we also have to modify the \ATLsc* formula to be checked, by making the
intermediary states ``invisible''; this is a classical construction in temporal
logics. In~the end, we~prove the following result:
\begin{restatable}{theorem}{cgstb}
\label{thm-cgs2tb}
  For any \CGSO~$\calC=\tuple{Q,R,\ell,\Agt,\Alac,\Chc,\Edg,(\sim_a)_{a \in
      \Agt}}$, we can compute 
  a
  \TBGSO~$\calT=\tuple{Q',R',\ell',\Agt,\Alac,\Chc',\Edg',(\approx_a)_{a \in
      \Agt}}$ with $Q\subseteq Q'$, for which there is a logspace
  translation~$\phi\in\ATLsc* \mapsto \widetilde\phi\in \ATLsc*$ such that for
  any~$\phi\in\ATLsc*$ and any state~$q$ of~$\calC$, it~holds:
\[
\calC,q\models_\emptyset \phi \quad\Leftrightarrow\quad
  \calT,q\models_\emptyset \widetilde\phi.
\]
Furthermore, if $\calC$ is uniform, then so is~ $\calT$.
\end{restatable}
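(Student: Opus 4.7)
The plan is to expand each original state of~$\calC$ into a chain of $p := |\Agt|$ transitions in~$\calT$, sequentializing the simultaneous choices of $a_1,\ldots,a_p$ as suggested by Figure~\ref{fig-cgs2tb}. Concretely I would take
\[
Q' \;=\; Q \,\cup\, \{(q,\mu) \mid q\in Q,\ \mu \in \Alac^{\{a_1,\ldots,a_i\}} \text{ for some } 1\leq i \leq p-1\},
\]
making every $q\in Q$ owned by~$a_1$ and every $(q,\mu)$ owned by~$a_{|\mu|+1}$. Transitions are: from an original $q$, move $m\in\Alac$ of~$a_1$ leads to $(q,\{a_1\mapsto m\})$; from an intermediate $(q,\mu)$, move $m\in\Alac$ of~$a_{|\mu|+1}$ leads to $(q,\mu\cup\{a_{|\mu|+1}\mapsto m\})$ if $|\mu|<p-1$, and directly to $\Edg(q,\mu\cup\{a_p\mapsto m\})\in Q$ if $|\mu|=p-1$. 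I would adjoin to~$\AP$ a fresh atomic proposition~$\mathrm{int}$ that holds exactly at intermediate states, and take $\ell'$ to agree with~$\ell$ on~$Q$.

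For the observation relations I~would declare $\approx_a$ to coincide with $\sim_a$ on~$Q$, and to identify $(q,\mu)\approx_a(q',\mu')$ iff $q\sim_a q'$ and $|\mu|=|\mu'|$, with no equivalence between $Q$ and the intermediate layer. The key property is that when~$a_i$ is about to play at an intermediate state, her $\approx_{a_i}$-observation depends only on the $\sim_{a_i}$-class of the underlying original state, so she learns nothing from the moves already chosen by her predecessors. Uniformity is immediately preserved: if all~$\sim_a$ coincide with some~$\sim$, then all~$\approx_a$ coincide as well.

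The translation~$\widetilde{\cdot}$ is defined homomorphically, letting atomic propositions, Boolean connectives and every strategy quantifier pass through unchanged, while setting
\[
\widetilde{\X\phip} \;:=\; \X^p\, \widetilde{\phip}, \qquad
\widetilde{\phip\Until\psip} \;:=\; (\mathrm{int}\ou\widetilde{\phip}) \Until \bigl(\non\mathrm{int} \,\wedge\, \widetilde{\psip}\bigr).
\]
This is visibly logspace. Semantic equivalence is proved by structural induction on~$\phi$, guided by the bijection between $\calC$-histories~$\rho$ and their $\calT$-expansions~$\tilde\rho$ (of length $p\cdot(|\rho|-1)+1$). The bijection lifts to compatible strategies: given $f_a\in\StratS(\{a\})$ in~$\calC$, I~would build $\tilde f_a\in\StratS(\{a\})$ in~$\calT$ which, at a $\calT$-history ending in a state owned by~$a$, consults~$f_a$ on the projected $\calC$-history, and returns a fixed default move elsewhere; the definition of~$\approx_a$ ensures compatibility. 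Conversely, any compatible strategy in~$\calT$ is forced by~$\approx_a$ to give the same value on all expansions of a given $\calC$-history, hence restricts to a well-defined compatible strategy in~$\calC$.

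The step I~expect to be the main obstacle is the strategy-quantifier case $\Diams[A]\phip$: I~need to match $\Out(\rho_{\leq n},g_A\compo f_B)$ in~$\calC$ with $\Out(\tilde\rho_{\leq n\cdot p},\tilde g_A\compo\tilde f_B)$ in~$\calT$ via the expansion bijection. This reduces to checking that each block of~$p$ consecutive $\calT$-transitions encodes exactly one $\calC$-transition, with assigned agents' moves determined by their strategies and unassigned agents' moves chosen freely in either game. The~$\X$ and~$\Until$ cases are engineered to match the reindexing $n\mapsto n\cdot p$ and to skip intermediate positions via~$\mathrm{int}$; the atomic, Boolean, $\Relax[A]$, $\Relaxco[A]$ and~$\Diamsco[A]$ cases follow either immediately or by direct reduction to the~$\Diams$ case. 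Applied at position~$0$ with empty context, this gives the desired equivalence $\calC,q\models_\emptyset \phi \Leftrightarrow \calT,q\models_\emptyset \widetilde\phi$.
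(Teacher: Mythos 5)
Your construction is essentially the paper's own: sequentialize the simultaneous choices into a chain of intermediate states that are observationally indistinguishable to all players (same level, $\sim_a$-equivalent base state, moves already played invisible), and translate formulas by the classical trick making intermediate states invisible ($\X\mapsto\X^p$, relativized $\Until$), so the result and its proof route coincide with the paper's. One minor wording caveat: a $\calC$-history generally has several $\calT$-expansions (distinct move vectors may lead to the same successor), so your correspondence is a projection rather than a bijection; this is harmless because your $\approx_a$ makes all expansions of a given history equivalent and your strategy translations factor through the projection.
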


\subsection{\QCTL* in a nutshell}
As we explain in the sequel, under some restrictions (uniformity or
restriction of the considered strategies), the model-checking problem of
\ATLsc* over \CGSOs can be reduced to the model-checking problem of the
temporal logic Quantified \CTL* over Kripke Structures. \QCTL*~extends the
classical branching time temporal logic~\CTL* with \emph{atomic propositions
  quantifiers} $\exists P.\ \phis$ (and its dual $\forall P.\ \phis$), where $P$
is an atomic proposition in~$\AP$:
\begin{xalignat*}1 
    \QCTL* \ni \phis,\psis \coloncolonequals & P \mid
    \non\phis \mid \phis\ou\psis
    \mid \Ex \phip \mid \exists P.\ \phis
    \\
    \phip,\psip \coloncolonequals & \phis \mid \non\phip \mid \phip\ou\psip
    \mid \X\phip\mid \phip\Until\psip
\end{xalignat*}
where $P$~ranges  over~$\AP$. We~briefly review \QCTL* here, and refer
to~\cite{LM14} for more details and examples.

\QCTL*~formulas are then interpreted over a (classical) Kripke structure
$\calS=\tuple{Q,R,\ell}$. Informally $\exists P.\ \phis$ is used to specify
the existence of a valuation for~$P$ over~$\calS$ such that $\phis$ is
satisfied.
%
We~consider two different semantics of $\exists P. \phi$:  
the \emph{structure semantics} where one looks
for a valuation of the structure~$\calS$, and the \emph{tree semantics} where
one considers the valuation of its execution tree.
%
%
%
%

For $X\subseteq \AP$, two Kripke
structures $\calS=\tuple{Q,R,\ell}$ and $\calS'=\tuple{Q',R',\ell'}$ are
\emph{$X$-equivalent} (denoted by $\calS \equiv_X \calS'$) whenever $Q=Q'$,
$R=R'$, and $\ell \equiv_X \ell'$ (\ie,~${\ell(q)\cap X=\ell'(q)\cap X}$ for
any~$q\in Q$). 
The \newdef{structure semantics}
of 
\QCTL* (whose satisfaction relation is denoted by~$\models_{\mathtt{st}}$) is
derived from the 
semantics of \CTL* by adding the following rule:
\[
\calS, \rho, n\models_{\mathtt{st}} \exists P.\ \phis \quad\mbox{ iff }\quad
  \exists \calS'.\ \calS' \equiv_{\AP\smallsetminus\{P\}} \calS \text{ and }
  \calS',\rho,n\models_{\mathtt{st}} \phis.
\]
In~other terms, $\exists P.\ \phis$ means that it is possible to
(re)label the Kripke structure with~$P$ in order to make $\phis$
hold. 
As~for \CGSO, we~write $\calS, q_0 \models_{\mathtt{st}}
  \phi_s$ whenever $\calS, \rho,0 \models_{\mathtt{st}}
  \phi_s$ for all path~$\rho$ such that $\rho(0)=q_0$. 
The~\emph{tree semantics} (whose satisfaction relation is denoted
by~$\models_{\mathtt{tr}}$) is 
defined as:
$\calS, q_0 \models_{\mathtt{tr}} \phi_s$ if, and only~if,
$\calT_{\calS}(q_0),q_0 \models_{\mathtt{st}} \phi_s$ (where
$\calT_{\calS}(q_0)$ is the execution tree of~$\calS$ from~$q_0$, seen as an
infinite-state Kripke structure).

\begin{example}
\label{rem-qctl-path}
Consider formula $\EX \phi \et \forall P.\Big(\EX (\phi \et P) \impl \AX (P
\impl \phi)\Big)$, which we write $EX_1 \phi$ in the sequel. This
formula states the existence of a unique immediate successor
satisfying~$\phi$. We will reuse it later in our construction.
\end{example}


\begin{theorem}[\cite{LM14}]\label{thm-qctl}
\begin{enumerate}
\item For the structure semantics, the model-checking problem of
  \QCTL* is \PSPACE-complete and the satisfiability problem is
  undecidable.
\item  For the tree semantics, the model-checking and satisfiability problems of
  \QCTL* are decidable, and \ComplexityFont{Tower}-complete.
\end{enumerate}
\end{theorem}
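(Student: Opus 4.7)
The plan is to treat the four statements separately, since they require rather different techniques. For the structure-semantics upper bound in part~(1), I~would mimic the standard QBF-in-PSPACE algorithm: to evaluate $\exists P.\ \phis$ on a finite Kripke structure $\calS=\tuple{Q,R,\ell}$, guess a valuation of~$P$ on~$Q$ (using $|Q|$~bits, reusable along the recursion), relabel $\calS$ accordingly, and recursively evaluate $\phis$; the base \CTL*-model-checking procedure already runs in PSPACE, and the nested propositional quantifiers fit into the same polynomial-space budget by alternation. For the matching lower bound it suffices to reduce from~QBF: a~formula $Q_1 x_1 \dots Q_n x_n.\ \psi$ is true iff the QCTL formula $Q_1 P_1 \dots Q_n P_n.\ \widehat\psi$ (with $\widehat\psi$ a Boolean combination of the~$P_i$) holds at the unique state of a one-state reflexive Kripke structure.

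For the undecidability of structure-semantics satisfiability, I~would reduce from a tiling problem for the quarter-plane (equivalently, from non-halting of a two-counter machine). Given a tile set, one~writes a QCTL* formula that uses propositional quantifiers to assert the existence of a labelling marking the horizontal and vertical grid relations together with the tile colour at each state, and then universally enforces the local compatibility constraints and, if appropriate, a recurrence condition on a distinguished tile. Because the structure semantics leaves the shape of the underlying graph completely free, the encoding can actually force a genuinely grid-like model to exist.

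For the tree-semantics claims in part~(2), the standard route is via \emph{alternating parity tree automata}. Each \CTL*-fragment translates into an APT accepting exactly those trees satisfying it, by the Kupferman--Vardi--Wolper construction. A~quantifier $\exists P.\ \phis$ is then handled by \emph{projecting} the APT for~$\phis$ along the $P$-component; since projection is only sound on \emph{nondeterministic} automata, one first removes alternation via Safra-like determinisation, paying an exponential blow-up per quantifier alternation. Emptiness of the resulting APT is decidable, yielding decidability of both model-checking and satisfiability, and the iterated blow-ups produce the Tower upper bound. The matching Tower lower bound comes from encoding tilings of grids of size a tower of exponentials in the nesting depth of quantifiers, using propositional quantifiers to compare ``addresses'' one exponential larger per additional alternation, in the spirit of the $EX_1$~idiom of Example~\ref{rem-qctl-path}.

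The main obstacle I~expect is this tree-semantics lower bound: one must design a compact, uniform encoding where every new propositional-quantifier alternation genuinely multiplies the expressible address space by an exponential, while keeping the formula polynomial in the address length. Once that encoding is in place, the upper bound falls out of the automata construction almost mechanically, and the two structure-semantics statements are essentially folklore adaptations of the QBF and tiling reductions.
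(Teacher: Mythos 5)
This theorem is not proved in the paper at all---it is imported verbatim from~\cite{LM14}---and your sketch follows essentially the same route as the proofs given there: QBF-style enumeration of labelings plus a QBF reduction for the \PSPACE bounds under the structure semantics, a tiling/grid-forcing reduction for the undecidability of structure-semantics satisfiability, and, for the tree semantics, translation to alternating parity tree automata with projection handled after alternation removal, together with a yardstick-style encoding of tower-sized grids for the lower bound. The only slip is terminological: one does not ``Safra-determinise'' the tree automaton (deterministic parity tree automata are strictly weaker), one applies the simulation theorem turning an alternating automaton into a nondeterministic one before projecting, which is exactly where the per-quantifier exponential arises.
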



\section{Model checking uniform  \CGSOs}
\label{sec3}

As we have already mentioned, it is well known that the model-checking of \ATL
over \CGSO is undecidable~\cite{jacm49(5)-AHK,DT11}. Since \ATL is a fragment
of \ATLsc*, this undecidability result holds also for \ATLsc*. We~prove in
this section that decidability can be regained by restricting to
uniform partial observation, i.e., 
by assuming that  the observation is the same for all the agents. The~idea
consists in reducing the model-checking problem of \ATLsc* over uniform \CGSOs
to the model-checking problem for \QCTL* with tree semantics over a complete
Kripke structure representing the possible observations.


In the sequel, we consider a uniform \CGSO $\calC=\tuple{Q,
  R, \ell, \Agt, \Alac, \Chc, \Edg,(\sim_a)_{a \in \Agt}}$ with the finite set
  of states $Q=\{q_0,\ldots,q_s\}$, the finite set of moves
  $\Alac=\{m_1,\ldots,m_r\}$ and $\mathord{\sim_a}=\mathord{\sim}$ for all
  $a \in \Agt$. For each $q \in Q$, we~denote by $[q]$ the equivalence class
  of~$q$ with respect to~$\sim$, \ie $[q] \subseteq 2^Q$ is such that $q'
\in [q]$ if, and only~if, ${q \sim q'}$. We~also consider an \ATLsc*
formula $\phi_s$ whose $\Diams$-depth (i.e.~the~maximal number of nested
modalities~$\Diams$) is~$\lambda$ (we~assume here that modalities~$\Diamsco$
and $\Relaxco$ have been previously removed from~$\phis$, as the set of agents
is known).
The idea behind our proof is to consider a Kripke structure~$\calS_{\calC}$ as
a complete graph whose set of nodes is the set of equivalence classes~$[q]$.
We~then use the power of~\QCTL* (with the tree semantics) to build paths and
strategies in this Kripke Structure. The fact that, in the Kripke structure,
the paths go through equivalent classes of states (and not through states),
guarantees that the corresponding strategies are compatible with the partial
observation.

Before we build the Kripke Structure, we need to introduce some sets
of fresh atomic propositions (not appearing in~$\calC$ or
in~$\phis$) which we will use to encode paths and strategies in the
Kripke structure. Let $\APQ \eqdef \{\qsf_i^\ell \mid  1 \leq i \leq
s \ \mbox{and}\  0\leq \kappa \leq \lambda  \}$,
$\APM{a} \eqdef \{\msf_1^a,\ldots,\msf_r^a\}$ for every $a \in \Agt$, and
$\APMt \eqdef 
\bigcup_{a\in\Agt}\APM{a}$ and $\APS=\{ \ssf_i \mid  1 \leq i
\leq s \}$ be those sets of fresh atomic propositions. We~assume that
$\APQ$, $\APMt$ and $\APS$ are subsets of~$\AP$. 
Intuitively, propositions in~$\APQ$
will be used to fix a path of~$\calC$ at each level~$\kappa$ of quantification
(a~path at level~$\kappa$ being a sequence of states~$\qsf_i^\kappa$);
propositions in~$\APM{a}$ will be used to label the choices
corresponding to a strategy of agent~$a$.

The Kripke structure~$\calS_{\calC}$ associated with~$\calC$ is then defined as
$\tuple{Q_\calC,R_\calC,\ell_\calC}$ where ${Q_\calC = \{ [q] \mid
q \in Q\}}$, ${R_\calC = Q_\calC \times Q_\calC}$ and $\ell_\calC([q])
= \{ \ssf_i \mid q_i \in [q] \}$. 
%
Compatible strategies can then be encoded as functions labeling the execution
trees of this Kripke structure (from the state where we evaluate the formula):
a~strategy~$f_a$ for some agent~$a$ is represented as a function
$\widetilde{f_a}\colon Q_\calC^+ \to \APM{a} $ that labels the execution
trees~$\calT_{\calS_{\calC}}(q)$ of~$\calS_{\calC}$ from the current state~$q$ with
proposition in~$\APM{a}$. However note that $\calS_{\calC}$ being a complete
graph, paths in this structure might not correspond to paths in the
associated \CGSO~$\calC$; we will use atomic propositions~$\qsf_{i}$ to
identify concrete paths of~$\calC$.

Given a coalition $B$ in~$\Agt$, an integer $0\leq\kappa \leq
\lambda$, and a subformula of $\phis$ at  $\Diams$-depth $\kappa$, 
we~define a \QCTL* formula $\widetilde{\phi}^{B,\kappa}$ inductively as follows:
\begin{xalignat*}2
\widetilde{P}^{B,\kappa}  &\eqdef  \OU_{\{ i \mid  P\in \ell(q_i)\}} \qsf^\kappa_i   &
\widetilde{\Relax[A] \phis}^{B,\kappa}  &\eqdef  \widetilde{\phis\vphantom{\psi}}^{B\smallsetminus A,\kappa} \\[1mm]
\noalign{\pagebreak[1]}
\widetilde{\phi \et \psi}^{B,\kappa}
&\eqdef  \widetilde{\vphantom{\psi}\phi}^{B,\kappa}  \et 
\widetilde{\vphantom{\psi}\psi}^{B,\kappa}  &
\widetilde{\non \phi}^{B,\kappa}
&\eqdef    \non \widetilde{\vphantom{\psi}\phi}^{B,\kappa}  
\\[3mm]
%
%
\widetilde{\phip \Until \psip}^{B,\kappa} &\eqdef  \widetilde{\vphantom{\psi}\phip}^{B,\kappa}  \Until
\widetilde{\vphantom{\psi}\psip}^{B,\kappa}  &
\widetilde{\X \phip}^{B,\kappa}  &\eqdef    \X \widetilde{\vphantom{\psi}\phip}^{B,\kappa} 
\end{xalignat*}
For a formula of the shape\footnote{For the sake of readability, we
  restrict to one-player coalitions here; the construction easily
  extends to the the general case with a coalition (including the empty coalition).} $\Diams[a] \phip$, the construction is 
as follows:
\begin{multline*}
  \widetilde{\Diams[a] \phip}^{B,\kappa} \eqdef
    \exists \msf_1^{a} \ldots \msf_\kappa^{a}.  \Big[ \Phistrat(\{a\}) \et \forall
 \qsf_{1}^{\kappa+1}\ldots\qsf_{s}^{\kappa+1}. \Big( \Phipath(\kappa+1) \et  \\
   \OU_{1\leq i\leq s}  (\qsf_{i}^\kappa \et \qsf_i^{\kappa+1})   \et \Phiout(\kappa+1,B\cup \{a\}) \Big) \impl  \All \Big(\G \OU_{1\leq i \leq s} \qsf^{\kappa+1}_i  \;\impl\; 
    \widetilde{\vphantom{\phi}\phip}^{B\cup\{a\},\kappa+1}\Big)\Big] 
\end{multline*}
%
%
with
\begin{xalignat*}1
\Phistrat(A) &\eqdef  \ET_{a\in A} \AG \Big(\OU_{1\leq j \leq r}
  (\msf_j^{a} \et \ET_{j'\not=j}  \non\msf_{j'}^a)\Big) 
  \\[1mm]
\Phipath(\kappa+1) &  \eqdef \EG \Big(\OU_{1\leq i \leq
  s} \big(\qsf_i^{\kappa+1} \et \ssf_i \et \ET_{i'\not=
  i} \non \qsf_{i'}^{\kappa+1}\big) \et \EX_1 (\OU_{1\leq i \leq
  s} \qsf_i^{\kappa+1})\Big) \\[1mm] 
\Phiout(\kappa+1,B) &\eqdef \EG \Big( \OU_{1\leq i \leq s} 
  \OU_{\bigl\{(\vect{m},q_j) \ \bigm|\
  \begin{array}{@{}l@{}} 
    \scriptstyle \vect{m} \in \Alac^\Agt \wedge\\[-2mm]
    \scriptstyle q_j=\Edg(q_i,\vect{m})
  \end{array}\bigr\}} \qsf_i^{\kappa+1} \et 
  \widetilde{\vect{m}}^B \et \EX \qsf^{\kappa+1}_j \Big)  
\end{xalignat*}
\noindent
where $\widetilde{\vect{m}}^B$ is a shorthand for the formula
  ${\ET_{\{(b,m_k) \mid b \in B \wedge \vect{m}[b]=m_k\}}
  m^b_k}$. We point out the fact that $\Phiout(\kappa+1,B)$ is based on the
  transition table~$\Edg$ of~$\calC$; consequently, its size is in
  $O(|Q|^2\cdot|\Alac|^{|\Agt|})$ (\ie,~in~$O(|Q|\cdot|\Edg|)$).

The intuition behind formula $\widetilde{\Diams[a] \phip}^{B,\kappa}$ is as
  follows: the~formula first ``selects'' a strategy for agent~$a$ under the
  form of a labeling of the execution tree of~$\calS_\calC$ with~$\msf_i^{a}$;
  it~then~uses subformula~$\Phistrat(\{a\})$ to ensure that this labeling
  correctly encodes a strategy for~$a$; finally, it~checks that all the
  outcomes of the selected strategy satisfy~$\phip$. The~latter task is
  achieved by considering all the labelings of the structure
  with~$\qsf^{\kappa+1}_{-}$, and, for the labelings that correspond to
  one outcome of the selected strategy, by checking the formula~$\phip$. 
  Ensuring that a labeling corresponds to an outcome is done as follows:
\begin{enumerate}
\item it corresponds to an infinite branch in the
  execution tree of~$\calS_\calC$, and each node labeled
  by~$\qsf^{\kappa+1}_i$ should correspond to a node~$[q_i]$ (labeled
  by~$\ssf_i$) in~$\calS_\calC$ (both~points are ensured by
  formula\footnote{See Example \ref{rem-qctl-path} for the definition of
  $\EX_1$} ${\Phipath(\kappa+1)}$);
\item at the present position, one of the propositions~$\qsf^{\kappa+1}_{i}$
  has to match with one of the state~$\qsf^{\kappa}_{i}$ of the previous level
  (this ensures that the path labeled by~$\qsf^{\kappa+1}_{-}$ starts from the
  ``current state'' considered in the game);
\item the branch obtained by this labeling effectively follows the
  choices dictated by the labels~$\msf_{-}^{b}$ encoding the strategies for
  $b \in B 
  \cup \{a\}$; this is checked by the formula  $\Phiout(\kappa+1,B
  \cup \{a\})$. 
\end{enumerate}
Finally, the formula checks that the corresponding path satisfies the formula
$\widetilde{\vphantom{\phi}\phip}^{B\cup\{a\},\kappa+1}$.

\medskip
The correctness of the reduction is stated in the following theorem:
\begin{theorem}
\label{th-mc-uii}
Let~$\phi_s$ be an $\ATLsc*$ state-formula, $\calC$
be a uniform \CGSO, and $q_\alpha$ be a state of~$\calC$.
Then:
\[
\calC,q_\alpha \sat_{} \phis \quad\mbox{if, and only~if,}\quad 
\calS_\calC,[q_\alpha] \models_{\mathsf{tr}}\exists \qsf_\alpha^0. \Big( \widetilde{\phis}^{\emptyset,0}\et \qsf^0_{\alpha} \Big)
\]
\end{theorem}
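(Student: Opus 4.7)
The plan is to prove a stronger statement by structural induction on~$\phi_s$, comparing $\ATLsc*$ satisfaction at a position along a path in~$\calC$ with $\QCTL*$ satisfaction at the corresponding node of the execution tree $\calT_{\calS_\calC}([q_\alpha])$. Concretely, for any subformula~$\phi$ of~$\phi_s$ at $\Diams$-depth~$\kappa$, any coalition $B \subseteq \Agt$, any compatible context $f_B \in \StratS(B)$, any path $\rho$ with $\rho(0)=q_\alpha$ and any position~$n$, I claim that, once $\calT_{\calS_\calC}([q_\alpha])$ is labeled by (i)~propositions $\msf^b_k$ for $b \in B$ encoding each $f_b$ as a choice of move at every tree node, and (ii)~propositions $\qsf^{\kappa'}_i$ for $0 \le \kappa' \le \kappa$ marking, along a unique branch at each level, the sequence of states traversed by the currently fixed path up to that level, we have $\calC,\rho,n \models_{f_B} \phi$ iff $\widetilde{\phi}^{B,\kappa}$ holds at the tree node encoding~$\rho_{\le n}$. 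The theorem then follows by instantiating this invariant on~$\phi_s$ itself at depth~$0$ with $B=\emptyset$: the outer $\exists\qsf^0_\alpha$, together with the conjunct~$\qsf^0_\alpha$, selects~$[q_\alpha]$ as the initial level-$0$ position.

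The base case and the Boolean/temporal cases are routine. An atomic proposition~$P$ translates to a disjunction over the states satisfying~$P$, read off the current $\qsf^\kappa_i$ marker. Boolean connectives and the operators~$\X$ and~$\Until$ commute with the translation once one observes that the outer $\All$-quantification and the guard $\G \bigvee_i \qsf^{\kappa+1}_i$ introduced by the surrounding strategy quantifier restrict evaluation precisely to the branch that encodes the current path in~$\calC$. The $\Relax$ clauses merely shrink the context, which is mirrored symbolically by $B \smallsetminus A$ in the translation.

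The central case is $\phi = \Diams[a]\phip$. For the forward direction, a witness compatible strategy~$g_a$ is encoded as a valuation~$\mu_a$ of the $\msf^a_k$ on the tree. This encoding is well defined precisely because the nodes of $\calT_{\calS_\calC}([q_\alpha])$ are sequences of $\sim$-equivalence classes, which is exactly the domain of compatible strategies for~$a$; and $\mu_a$ satisfies~$\Phistrat(\{a\})$ by construction. Any valuation of the $\qsf^{\kappa+1}_i$ satisfying $\Phipath(\kappa+1)$ selects an infinite branch of the tree whose nodes are tagged with concrete states of~$\calC$; the clause $\bigvee_i (\qsf^\kappa_i \et \qsf^{\kappa+1}_i)$ anchors this branch at~$\rho(n)$; and $\Phiout(\kappa+1, B \cup \{a\})$ forces consecutive states to be related by an $\Edg$-transition whose move vector agrees with $g_a \compo f_B$ on~$B\cup\{a\}$, the moves of the other agents being left free. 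These branches range over exactly the outcomes in $\Out(\rho_{\le n}, g_a \compo f_B)$, so applying the induction hypothesis at depth $\kappa+1$ with context $B\cup\{a\}$ closes the equivalence. The converse direction is symmetric: $\Phistrat(\{a\})$ turns any satisfying $\msf^a$-valuation into a unique move at each tree node, which by the quotient structure of the tree is automatically a compatible strategy for~$a$.

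The main obstacle, and the step requiring the most care, is the correctness of~$\Phiout$ in both directions. I need to verify that (a)~every outcome of $g_a \compo f_B$ starting from $\rho_{\le n}$ corresponds to some $\qsf^{\kappa+1}$-valuation satisfying the conjunction of $\Phipath$, the anchoring clause, and $\Phiout$; and (b)~conversely, every such valuation traces out a legitimate outcome, correctly accounting for the implicit universal quantification over the moves of agents in $\Agt \smallsetminus (B \cup \{a\})$ imposed by the $\ATLsc*$ semantics. Point~(b) is delicate because $\Phiout$ is existential over $(\vect{m},q_j)$ pairs, and correctness relies on the outer universal quantification over $\qsf^{\kappa+1}$-valuations enumerating exactly the $\Edg$-reachable successors irrespective of the unspecified moves. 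Once this enumeration is established, the induction propagates cleanly through all the remaining connectives.
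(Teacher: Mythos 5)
Your plan is essentially the paper's own proof: the paper proves exactly this strengthened invariant (Proposition~\ref{prop-mc-uii}, stated via $f_B$-labelings and $(\kappa,\rho)$-labelings of the execution tree of~$\calS_\calC$) by structural induction, with the $\Diams[a]$ case argued as you describe and compatibility of the encoded strategies obtained for free because tree nodes are sequences of $\sim$-equivalence classes. The delicate point you flag---that the universally quantified $\qsf^{\kappa+1}$-valuations satisfying $\Phipath$, the anchoring clause and $\Phiout$ range exactly over the outcomes in $\Out(\rho_{\le n}, g_a\compo f_B)$---is precisely the content of both directions of the paper's argument for that case.
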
 

\begin{proof}
%
%
First we point out the fact that none of the formula used in the reduction
checks that the considered strategies are compatible, but in fact this is
guaranteed because we evaluate the formula over the Kripke
structure~$\calS_\calC$, and two equivalent paths in~$\calC$ with respect to
$\sim$ are necessarily matched to a unique path in the execution tree of
$\calS_\calC$.

We now prove that our reduction is correct. Let
$\calT_{\calS_\calC}([q_\alpha])=\tuple{T_\calC,R'_\calC,\ell'_\calC}$ be the
execution tree of the Kripke structure~$\calS_\calC$ from the state $[q_\alpha]$. Note that nodes
in~$\calT_{\calS_\calC}([q_\alpha])$ are thus finite paths of the form $[q_0][q_1]
\ldots[q_k]$ (with $q_0=q_\alpha$). For~a finite path~$\pi$ in~$\calC$
such that $\pi(0)=q_\alpha$, we~denote by~$[\pi]$ the
path in~$\calS_\calC$ such that $|\pi|=|[\pi]|$ and for all $0 \leq i <
|\pi|$, we have $[\pi](i)=[\pi(i)]$. We~also denote by~$\pathf(\pi)$ the path
in~$\calT_{\calS_\calC}([q_\alpha])$ such that $|\pathf(\pi)|=|\pi|$ and for all $0 \leq i
< |\pi|$, $\pathf(\pi)(i)=[\pi_{\leq i}]$. We~now consider the tree
$\calT'=\tuple{T_\calC,R'_\calC,\ell'}$ where $\ell'$~extends the
function~$\ell'_\calS$ with propositions of type~$\msf_j^a$
and~$\qsf^\kappa_i$.
This extension is used to encode a strategy context and to describe a
path of~$\calC$. Given a compatible strategy $f_B=\{f_b\}_{b \in B}$ for the
coalition~$B=\{b_1,\ldots,b_h\}$, an~infinite  path~$\rho$ in~$\calC$
such $\rho(0)=q_\alpha$
and a position~$n \geq 0$, we say that $\ell'$~is:
\begin{itemize}
\item an $f_B$-labeling whenever, for every node $\gamma \in T_\calC$
  with $\gamma=[\pi]$ for some finite path~$\pi$ in~$\calC$, for any $b
  \in B$, for any $1\leq j \leq r$, we have $\msf^b_j \in
  \ell'(\gamma)$ if, and only~if,  $f_b(\pi)=m_j$;
\item a $(\kappa,\rho)$-labeling if the following two conditions are
  verified:
  \begin{enumerate}
   \item for all  $j \geq 0$,   
    it~holds $\qsf^\kappa_i \in
    \ell'([\rho_{\leq j}])$ if, and only~if, $\rho(j)=q_i$ 
  \item if $q_i \in \ell'(\pi)$ for a node  $\pi \in T_\calC$, 
    then there exists $j \geq 0$   
    such that $\pi=[\rho_{\leq j}]$. 
  \end{enumerate}
  In other words, the
  propositions $\qsf^\kappa_1,\ldots,\qsf^\kappa_s$ label a unique
  branch in the tree, that can be matched with the path~$\rho$.
\end{itemize}

We say that for $0 \leq \kappa \leq \lambda$, a~subformula~$\phi$ of~$\phis$ 
occurs at   $\Diams$-depth~$\kappa$, if the
number of $\Diams$-quantifier above~$\phi$ in tree representing
formula~$\phis$ is~$\kappa$. 

\medskip
\begin{proposition}
\label{prop-mc-uii}
Let $\phi$ be some subformula of~$\phis$ which occurs at
$\Diams$-depth $\kappa$ with $0 \leq \kappa \leq \lambda$, $\rho$~be a
run of~$\calC$ such that $\rho(0)=q_\alpha$, $n$~be a natural number,
$f_B$~be a compatible strategy for coalition~$B$. 
Let $\calT'=\tuple{T_\calC,R'_\calC,\ell'}$ be the tree defined as
above. 
If~$\ell'$ is an $f_B$-labeling
and a $(\kappa,\rho)$-labeling, then we have:
\[
\calC,\rho,n \sat_{f_B} \phi \quad\mbox{if, and only~if,}\quad 
\calT',\pathf(\rho),n \models_\mathsf{st} \widetilde{\phi}^{B,\kappa}. 
\] 
\end{proposition}


\begin{proof}
The proof is done by structural induction over~$\phi$. In~the following, we
let $q_{\alpha'}=\rho(n)$.  
\begin{itemize}
\item 
case $\phi\eqdef P$: we~have $\rho(n)\sat_{f_B} P$ if, and only~if, $P\in
\ell(q_{\alpha'})$. As~$\ell'$ is a $(\kappa,\rho)$-labeling, we~know that
$\calT',\pathf(\rho),n \sat \qsf^\kappa_\alpha$. By~definition
of~$\widetilde{P}^{B,\kappa}$, the implication follows. Conversely assume
$\calT',\pathf(\rho),n \sat \widetilde{P}^{B,\kappa}$, we~know that $\rho(n)$
has to be labeled by~$P$, because $\ell'$~is a $(\kappa,B)$-labeling.
\item 
case $\phi\eqdef \phip \Until \psip$: if $\calC,\rho,n \sat_{f_B}
\phip\Until\psip$, then there exists $i\geq n$ s.t.\ $\calC,\rho,i
\sat_{f_B} \psip$ and for any $n \leq j < i$, we~have $\calC,\rho,j
\sat_{f_B} \phip$. By~i.h., we~get $\calT',\pathf(\rho),i \sat
\widetilde{\psip}^{B,\kappa}$ and, for any~$j$,  
$\calT',\pathf(\rho),j \sat \widetilde{\phip}^{B,\kappa}$; from this we obtains
$\calT',\pathf(\rho),n \sat \widetilde{\phi}^{B,\kappa}$. The converse is
similar. 
\item 
case $\phi\eqdef \Diams[a] \phip$:  
Assume $\calC,\rho,n\sat_{f_B} \Diams[a] \phip$. Then there exists a
$\sim$-compatible strategy~$f_a$ s.t.~for any $\rho' \in \Out(\rho_{\leq
  n},f_a\compo f_C)$, we have $\rho' \sat_{f_a\compo f_B} \phip$. From this
strategy~$f_a$, we~deduce a valuation for propositions
$\msf^a_1,\ldots,\msf^a_r$ extending~$\ell'$ over~$T$ (because $f_a$ is
$\sim$-compatible), and satisfying $\Phistrat(\{a\})$. Now extend $\ell'$ with
a valuation for $\qsf^{\kappa+1}_1,\ldots,\qsf^{\kappa+1}_s$ following the run
$\rho'$ (i.e.~for every state $\rho'(i)=q_\beta$, the corresponding
node~$[\rho'](i)$ is labeled by $\qsf^{\kappa+1}_\beta$, and only
$[\rho']$-nodes are labeled by these propositions). Let $\ell''$ be this new
valuation for~$\calT'$.Then clearly we have:
\begin{itemize}
\item $\calT',\pathf(\rho),n \sat \Phipath(\kappa+1)$, since $\qsf$ propositions
  label a path; 
\item $\calT',\pathf(\rho),n \sat \qsf^\kappa_{\alpha'} \et
  \qsf^{\kappa+1}_{\alpha'}$, because the current position belongs to the
  runs~$\rho$ and~$\rho'$, and the new run~$\rho'$ is issued
  from the current  position;  
\item $\calT',\pathf(\rho),n \sat \Phiout(\kappa+1,C\cup\{a\})$, meaning that
  the labeling of $\qsf^{\kappa+1}_{-}$ propositions follows the "correct"
  path~$\rho'$ from $\Out(\rho_{\leq n},f_a\compo f_B)$;
\item finally, $\calT',\pathf(\rho'),n  \sat
  \widetilde{\phip}^{B\cup\{a\},\kappa+1}$  by~i.h. 
\end{itemize}
Therefore we have: $\calT',\pathf(\rho),n  \sat \widetilde{\phi}^{B,\kappa}$.

We now prove the converse implication. Assume $\calT',\pathf(\rho),n \sat
\widetilde{\phi}^{B,\kappa}$. From the existence of a labeling for
$\msf^a_1,\ldots,\msf^a_r$ satisfying~$\Phistrat(\{a\})$, we~deduce a
$\sim$-compatible strategy~$f_a$ in~$\calC$ for every finite runs issued
from~$\rho(\leq n)$
Now consider a valuation for $\qsf^{\kappa+1}_1,\ldots,\qsf^{\kappa+1}_s$.
Either it makes the left-hand side subformula of the implication to be false,
and there is no consequence, or this subformula is true and in this case, the
valuation describes a run~$\rho'$ in~$\calC$ issued from~$\rho(n)$ and
belonging in $\Out(\rho_{\leq n},f_a\compo f_B)$; this run has to satisfy
$\widetilde {\phip}^{B\cup\{a\},\kappa+1}$, and by~i.h.\ we get that $\calC,\rho',n
\sat_{f_a\compo f_B} \phip$.\qed\qed
\end{itemize}
\let\qed\relax
\end{proof}
\let\qed\relax

\end{proof}

\begin{corollary}
The model-checking problem of  \ATLsc*
over uniform \CGSOs is decidable (and \textsf{Tower}-complete).
\end{corollary}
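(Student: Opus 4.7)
My plan is to derive the corollary directly by combining the reduction established in Theorem~\ref{th-mc-uii} with the known complexity of \QCTL* model-checking under the tree semantics (Theorem~\ref{thm-qctl}), and to obtain the matching lower bound by observing that full observation is a special case of uniform partial observation.

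For the upper bound, I~would first argue that the translation $\phis \mapsto \exists \qsf_\alpha^0.\ (\widetilde{\phis}^{\emptyset,0} \et \qsf^0_\alpha)$ from Theorem~\ref{th-mc-uii} is effective and of size polynomial in $|\calC|+|\phis|$: the inductive construction introduces, at each $\Diams$-depth $\kappa$, a bounded number of fresh atomic propositions (one per state for the $\qsf^\kappa$-layer, and $|\Alac|$ per agent for the $\msf^a$-layer), and the auxiliary subformulas $\Phistrat$, $\Phipath$ and $\Phiout$ have size $O(|Q|\cdot|\Edg|)$, which is polynomial in the input. Moreover, $\calS_\calC$ has at most $|Q|$ states and can be built in polynomial time. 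Hence model-checking $\calC,q_\alpha \models \phis$ reduces in polynomial time to model-checking a \QCTL* formula on a finite Kripke structure under the tree semantics. By Theorem~\ref{thm-qctl}(2), this problem is decidable and in \textsf{Tower}, which yields both decidability and the \textsf{Tower} upper bound for our problem.

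For the lower bound, I~would invoke the fact recalled in the introduction that \ATLsc* model-checking is already \textsf{Tower}-complete on ordinary concurrent game structures with full observation~\cite{LM15}. Any such \CGS can be viewed as a \CGSO by taking the equivalence relation $\sim$ to be equality on~$Q$, which is trivially uniform; and under the identity relation the set of compatible strategies coincides with the set of all strategies, so the semantics of \ATLsc* over the resulting \CGSO agrees with the standard semantics over the \CGS. Thus \textsf{Tower}-hardness transfers verbatim to our setting.

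Combining the two bounds gives \textsf{Tower}-completeness. The only nontrivial step I~anticipate is checking the polynomiality of the translation, in particular that the $\Phiout(\kappa+1,B)$ gadget—whose size is $O(|Q|\cdot|\Edg|)$—is reused only a bounded number of times per $\Diams$-modality, so the overall blow-up is polynomial; this was already observed in the remark following the definition of $\Phiout$ in the excerpt, so no surprise is expected there.
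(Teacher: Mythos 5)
Your proposal is correct and follows essentially the same route as the paper, which leaves the corollary's proof implicit: the upper bound comes from the effective, polynomial-size reduction of Theorem~\ref{th-mc-uii} combined with the \textsf{Tower} decidability of \QCTL* under the tree semantics (Theorem~\ref{thm-qctl}), and the lower bound from the fact that full observation (the identity relation, under which compatible strategies are all strategies) is a special uniform case, so \textsf{Tower}-hardness of \ATLsc* model checking transfers.
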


\begin{remark}
Our algorithm can be used to decide whether one player (with partial
observation of the system) has a compatible strategy to win against all the
other players, whatever is the observation of the other players (since the
implicit quantification in strategy quantifiers ranges over all the outcomes).
As~a consequence, when considering the fragment of \ATL in which strategy
quantifiers always involve at most one-player coalitions, the model-checking
problem is decidable for possibly non-uniform partial observation. It~can be
checked that indeed the undecidability proof of~\cite{DT11} involves a
strategy quantification over a coalition of two players with different
observation.
\end{remark}

\section{Restriction to memoryless strategies}
\label{sec-mem}
In this section, we show that another way to obtain decidability for the
model-checking problem of \ATLsc* over game structure with (possibly
non-uniform) partial observation is by restricting the set of considered
strategies to memoryless strategies. We~denote by
$\ATLsc:$, the temporal logic obtained from \ATLsc* by replacing respectively
the quantifiers over strategies $\Diams[A]$ by $\Diamsm[A]$ and $\Diamsco[A]$ by
$\Diamsmco[A]$. Given a \CGSO $\calC$ a path $\rho$ of~$\calC$, and $n\in\bbN$
such that $n < |\rho|$ and $f_B\in\StratSZ(B)$, the semantics of these two
operators is given by:
\begin{xalignat*}{2}
\calC, \rho, n \models_{f_B}  \Diamsm[A] \phip & \quad\mbox{ iff }\quad   
      \exists g_A\in\StratSZ(A). \
      \forall \rho'\in\Out(\rho_{\leq n},g_A\compo f_B).\ \calC, \rho',n
      \models_{g_A\compo f_B} \phip \\
 \calC, \rho, n \models_{f_B}  \Diamsmco[A] \phip & \quad\mbox{ iff }\quad   
      \exists g_{\bar{A}}\in\StratSZ(\Agt \setminus A). \
      \forall \rho'\in\Out(\rho_{\leq n},g_{\bar{A}}\compo f_B).\ \calC, \rho',n \models_{g_{\bar{A}}\compo f_B} \phip 
\end{xalignat*}

In~\cite{LM15}, a reduction from model-checking $\ATLsc0$ and $\ATLsc:$
to \QCTL* model-checking problem over concurrent game structure with perfect
observation is given, the main idea is to use the structure semantics instead
of the tree semantics. In~this framework the complexity is simpler, since both
problems are \PSPACE-complete: the~number of memoryless strategies for one
player being bounded (by~$|\Alac|^{|Q|}$), we~can easily enumerate all of
them, and store each strategy within polynomial space. When considering
partial observation, we~can use the same approach, we need to also ensure that
the chosen strategies are compatible, but this can easily be achieved when
considering memoryless strategies, since one only needs to check that a
strategy~$f_a$ proposes for two equivalent states (w.r.t.~$\sim_a$) the same
move.

Let $\calC=\tuple{Q, R, \ell,
  \Agt, \Alac, \Chc, \Edg,(\sim_a)_{a \in \Agt}}$ be a \CGSO with finite set
of states
$Q=\{q_0,\ldots,q_s\}$, finite set of moves
$\Alac=\{m_1,\ldots,m_r\}$ and the finite set of agents
$\Agt=\{a_1,\ldots,a_p\}$. For each $a \in \Agt$, we suppose that the
number of equivalence classes for $\sim_a$ is $n_a$ and we use the notation
$E^a_i$ with $1 \leq i \leq n_a$ to represent the $i$-th equivalence
class of $\sim_a$. We~denote by~$\calK_\calC$ the Kripke
structure $\tuple{Q,R,\ell_\calK}$ underlying~$\calC$ based on the
same states and the same transitions, and where the labeling function
extends~$\ell$ by adding labels from the sets $\{\psf_q \mid q \in Q\}
\cup \{\psf^{\sim_a}_i \mid a \in \Agt \wedge 1 \leq i \leq n_a \}$
in such a way that the following two conditions hold:
\begin{itemize}
\item $\psf_q \in \ell_\calK(q')$ if, and only~if, $q'=q$,
\item for all $a \in \Agt$, $\psf^{\sim_a}_i \in \ell_\calK(q)$ if, and
only~if, $q \in E^a_i$.
\end{itemize}


Below we show how to translate a formula $\phis$ of $\ATLsc:$ into a
formula of $\widehat{\phis}$ of \QCTL* such that we
will have $\calC, q_0\models
 \phis$ for a state
$q_0$ of $\calC$ iff $\calK_\calC,q_0\models_{\mathsf{st}}
\widehat{\phis}$. Note that at the opposite of the previous section,
we consider here the
\emph{structure semantics}~\cite{LM14}  when performing the model-checking
of $\calK_\calC$: this means that instead of ranging over labelings of the
execution tree, propositional quantification ranges over
labelings of the Kripke structure. This is due to the fact that the
compatible memoryless  strategies we take here into account are
functions mapping equivalent states (and not anymore path of
equivalent states) to the same  move. Given a coalition $B$ in~$\Agt$  and $\phi$ a subformula of $\phis$, we  define a \QCTL* formula $\widehat{\phi}^{B}$
inductively as follows:
\begin{xalignat*}2
\widehat{P}^{B}  &\eqdef  P  &
\widehat{\Relax[A] \phis}^{B}  &\eqdef  \widehat{\phis\vphantom{\psi}}^{B\smallsetminus A} 
\\
\widehat{\phi \et \psi}^{B} &\eqdef  \widehat{\vphantom{\psi}\phi}^{B}  \et
\widehat{\vphantom{\psi}\psi}^{B}  &
\widehat{\non \phi}^{B}  &\eqdef    \non \widehat{\vphantom{\psi}\phi}^{B} 
\\
\widehat{\phip \Until \psip}^{B} &\eqdef  \widehat{\vphantom{\psi}\phip}^{B}  \Until
\widehat{\vphantom{\psi}\psip}^{B}  &
\widehat{\X \phip}^{B}  &\eqdef    \X \widehat{\vphantom{\psi}\phip}^{B} 
\end{xalignat*}

\noindent
For a formula of the shape $\Diamsm[a] \phip$,
we let\footnote{As in the previous case, the extension to   $\Diamsm[A]$ is
straightforward.}:  
\[
\widehat{\Diamsm[a] \phip}^B \eqdef \exists \msf_1^{r} \ldots
\msf_\kappa^{a}. 
\Big( \Phistratm(\{a\}) \et \All \Bigl[( \Phiout'(\{a\}\cup B)) \;\impl\;
\widehat{\vphantom{\phi}\phip}^{B\cup \{a\}}\Bigr]\Big)
\]
where 
\[
\Phistratm(A) \eqdef \ET_{a\in A} \biggl[ \AG \Big(\OU_{1\leq j \leq r} (\msf_j^{a} \et \ET_{j'\not=j}  \non\msf_{j'}^a)\Big) \et 
\ET_{1\leq c \leq n_a} \ET_{1\leq j \leq  r} \EF (\psf^{\sim_a}_c \et \msf^a_j) \impl \AG (\psf^{\sim_a}_c \impl \msf^a_j) \biggr]
\]
and
\[
\Phiout'(B) = \G\Big(
  \ET_{q\in Q} \
  \ET_{\bigl\{(\vect{m},q') \big|
    \begin{array}{@{}l@{}}
      \scriptstyle\vect{m} \in \Alac^\Agt   \wedge \\[-1mm] 
      \scriptstyle q'=\Edg(q,\vect{m})
      \end{array}
      \bigr\}} 
  \left[(\psf_q \et \widehat{\vect{m}}^B) \thn  \X\psf_{q'}
  \right]
\Big).
\]
%
%
where $\widehat{\vect{m}}^B$ is a shorthand for the formula ${\ET_{\{(b,m_k)
  \mid b \in B \wedge \vect{m}[b]=m_k\}} m^b_k}$. The last
Formula~$\Phiout'(B)$ characterizes the outcomes of the strategies in
use for some coalition~$B$. We point out that the \QCTL*
formula~$\widehat{\phis}^\emptyset$ has size $O(|\Phi|\cdot |Q|\cdot (|\Agt|\cdot|\Alac|^2 + |Q|\cdot|\Edg|))$. 

A memoryless and compatible strategy $f_B$ can easily be encoded by
labeling states of  $\calK_\calC$ with the moves of the coalition $B$ thanks to the propositions $\msf^a_1,\ldots,\msf^a_r$ for $a \in B$. 
The $\Phistratm(B)$ is here to ensure both that the labels $\msf^a_\_$
correspond to a strategy for coalition $B$ and to check that such a
strategy is memoryless and compatible: for all \emph{reachable} states
labeled with some move, the formula verifies that all the equivalent
states are labeled by the same move. Note that this property ranges
over reachable states: we could have unreachable states that are
labeled in an improper way but this is not a problem as such state
will never be reached from the current position (and consequently they
cannot impact on the truth value of the formula). 

 We will now consider the Kripke structure $\calK'_\calC=\tuple{Q,R,\ell'}$  where $\ell'$ extends the function
$\ell_\calK$  with propositions of type $\msf_j^a$. Given a compatible
memoryless strategy $f_B=\{f_b\}_{b \in B}$ for the
coalition~$B$, we say that $\ell'$ is an
$f_B$-labeling if   for every node $q  \in Q$, for any $b
  \in B$ and for any $1\leq j \leq r$, we have $\msf^b_j \in
  \ell'(q)$ iff  $f_b(q)=m_j$ (we recall that a memoryless strategy
  $f_b$ can be seen as a function from $Q$ to $\Alac$).

\begin{proposition}
\label{prop-mc-uii0}
Let $\rho$~be a
run of~$\calC$, $n$~be a natural number,
$f_B$ be a compatible strategy in $\StratSZ(B)$ for coalition $B$. If $\ell'$ is an $f_B$-labeling, then we have:

\[
\calC,\rho,n \sat_{f_B} \phis \quad\mbox{iff}\quad \calK'_\calC,\rho,n \sat \widehat{\phis}^{B} 
\]
%
\end{proposition}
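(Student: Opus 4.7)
I would prove the biconditional by structural induction on $\phi$, threading the coalition $B$ and its companion labeling $\ell'$ through the induction (so that at each case the hypothesis is an $f_{B'}$-labeling for a coalition $B'$ arising in the translation). The base case $\phi = P$ is immediate from $\widehat{P}^B = P$ and the fact that $\ell'$ agrees with $\ell$ on the atomic propositions of $\AP$ appearing in $\phi_s$. Boolean and temporal cases commute with the translation and are handled by routine arguments, using that infinite paths of $\calK'_\calC$ coincide with infinite paths of $\calC$. The case $\Relax[A]\phi$ follows directly from the induction hypothesis applied with coalition $B \smallsetminus A$ and the same labeling, since $\widehat{\Relax[A]\phi}^B = \widehat{\vphantom{\psi}\phi}^{B \smallsetminus A}$ and the latter formula does not reference the $\msf^a_j$ labels for $a \in A \cap B$.

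The interesting case is $\phi = \Diamsm[a] \phip$. For the forward implication, let $g_a \in \StratSZ(\{a\})$ witness $\calC, \rho, n \sat_{f_B} \Diamsm[a] \phip$. I would extend $\ell'$ to $\ell''$ by setting $\msf^a_j \in \ell''(q)$ iff $g_a(q) = m_j$. The fact that $g_a$ is a function yields the ``exactly one move per state'' conjunct of $\Phistratm(\{a\})$, and $\sim_a$-compatibility of $g_a$ yields the ``equivalent states carry the same move'' conjunct. By construction $\ell''$ is a $(g_a \compo f_B)$-labeling. The paths starting at $\rho(n)$ in $\calK'_\calC$ that satisfy $\Phiout'(\{a\} \cup B)$ correspond precisely to the suffixes from position $n$ of the outcomes in $\Out(\rho_{\leq n}, g_a \compo f_B)$, since at each step the labels $\msf^b_k$ for $b \in B \cup \{a\}$ prescribe the coalition's move. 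Applying the induction hypothesis to $\phip$ at coalition $B \cup \{a\}$ with labeling $\ell''$ on each such outcome yields $\widehat{\phip}^{B \cup \{a\}}$, completing this direction.

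For the converse, the existential relabeling on $\msf^a_j$ provides an extension $\ell''$ of $\ell'$ satisfying $\Phistratm(\{a\}) \et \All[\Phiout'(\{a\} \cup B) \impl \widehat{\phip}^{B \cup \{a\}}]$. I would extract a memoryless strategy $g_a \in \StratSZ(\{a\})$ by reading the unique $\msf^a_j$ label on each state reachable from $\rho(n)$ and extending $\sim_a$-uniformly to unreachable states (transporting the label of a reachable representative when the $\sim_a$-class meets the reachable set, and picking an arbitrary move uniformly on each $\sim_a$-class disjoint from the reachable set). The two conjuncts of $\Phistratm(\{a\})$, interpreted on the reachable part, guarantee that this definition is consistent and produces a memoryless $\sim_a$-compatible strategy. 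The paths from $\rho(n)$ satisfying $\Phiout'(\{a\} \cup B)$ then coincide with the suffixes of the outcomes in $\Out(\rho_{\leq n}, g_a \compo f_B)$, and the induction hypothesis transports $\widehat{\phip}^{B \cup \{a\}}$ back to $\phip$ under context $g_a \compo f_B$, yielding $\calC, \rho, n \sat_{f_B} \Diamsm[a] \phip$.

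The only subtle point is the treatment of unreachable states in the backward direction: because $\Phistratm(\{a\})$ only constrains labels on states reachable from $\rho(n)$ (through $\AG$ and $\EF$), the labels on other states are a priori arbitrary and may violate $\sim_a$-uniformity. This is harmless, since those states never appear in $\Out(\rho_{\leq n}, g_a \compo f_B)$, and the $\sim_a$-uniform extension of $g_a$ described above recovers a genuine memoryless compatible strategy without affecting any outcome or the truth value of $\widehat{\phip}^{B \cup \{a\}}$ along them. This is precisely the design rationale for expressing the compatibility clause of $\Phistratm$ using reachability modalities rather than a global property of the labeling.
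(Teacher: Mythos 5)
Your proof is correct and follows essentially the same route as the paper's: a structural induction where the $\Diamsm[a]$ case is handled by encoding the witnessing memoryless strategy as a labeling (forward) and reading a compatible memoryless strategy off the labeling on reachable states (backward), with $\Phiout'$ characterizing the outcomes. Your explicit treatment of the unreachable states matches the remark the paper makes just before the proposition (and is, if anything, spelled out a bit more carefully than in the paper's own argument).
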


\begin{proof}
The proof is done by structural induction over $\Phi$.
\begin{itemize}
\item $\Phi \eqdef \phip \Until \psip$: Assume $\calC,\rho,n \sat_{f_B} \phip \Until \psip$. Therefore there exists $i\geq n$ s.t.\ $\calC,\rho,i \sat_{f_B}  \psip$ and for any $n \leq j < i$, we have $\calC,\rho,j \sat_{f_B} \phip$. For every position between $n$ and $i$, the induction hypothesis can be applied and we deduce $\calK'_\calC,\rho,n \sat \widehat{\phip}^{B} \Until \widehat{\psip}^{B}$.
The converse is done similarly. 
\item $\Phi \eqdef \Diams[a] \phip$: $(1)\impl (2)$. Assume $\calC,\rho,n \sat_{f_B} \Diams[a] \phip$. There exists a memoryless compatible strategy $f_a$ s.t.\ for any $\rho' \in \Out(\rho_{\leq n},f_a\compo f_B)$, we have $\calC,\rho',n \models_{f_a\compo f_B} \phip$. Thus we can find a labeling of propositions $\msf^a_1,\ldots,\msf^a_r$  for $\calK'_\calC$ to represent $f_a$. This labeling completes the existing one for strategy context $F_B$ and the formula $\Phistratm(\{a\}\cup B)$ is then satisfied on $\calK'_\calC,\rho,n$. And any $\calK'_\calC$-run satisfying $\Phiout'(\{a\}\cup B)$ belongs to the set of outcomes generated by the strategy context $f_a\compo f_B$, and then satisfies $\widehat{\phip}^{\{a\}\cup B}$ by induction hypothesis. \\
\noindent
$(2)\impl (1)$.  Now  assume  $\calK'_\calC,\rho,n \models \widehat{\phi}^{B}$.
Therefore there exists a labeling for   $\msf^a_1,\ldots,\msf^a_r$ such
that $\Phistratm(\{a\})$ holds true. Such a labeling defines a memoryless
and  compatible strategy for the reachable states (from $\rho,n$).   And
finally  every run satisfying $\Phiout'(\{a\}\cup B)$ has to satisfy
$\widehat{\phip}^{\{a\}\cup B}$. By induction hypothesis we get the stated
result.\qed
\end{itemize}
\let\qed\relax
\end{proof}

Thus we have: 

\begin{theorem}\label{thm-atlsco-mc-ii}
The model-checking problem of $\ATLsc:$ is \PSPACE-complete. 
\end{theorem}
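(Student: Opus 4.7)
The plan is to establish membership and hardness separately, with the bulk of the work being the membership direction, which follows almost immediately from the machinery already developed in this section.

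For the upper bound, I would invoke Proposition~\ref{prop-mc-uii0} instantiated with the empty coalition $B = \emptyset$. With this choice, no $\msf^a_j$-labeling is required, so the extended structure $\calK'_\calC$ coincides with $\calK_\calC$, and any labeling is vacuously an $f_\emptyset$-labeling. Consequently, for an input \ATLsc: state-formula $\phis$, a \CGSO $\calC$, and a state $q_0$, we have $\calC, q_0 \models \phis$ if, and only~if, $\calK_\calC, q_0 \models_{\mathsf{st}} \widehat{\phis}^\emptyset$. The construction of $\calK_\calC$ is clearly polynomial (we only add one proposition per state and one per equivalence class per agent), and the remark immediately above states that $\widehat{\phis}^\emptyset$ has size polynomial in $|\phis|$ and $|\calC|$. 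Since Theorem~\ref{thm-qctl}(1) gives \PSPACE\ membership for \QCTL*\ model-checking under the structure semantics, composing this reduction with the \QCTL*\ procedure yields a polynomial-space algorithm for \ATLsc:\ model-checking.

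For the lower bound, \PSPACE-hardness is inherited from the analogous result for \ATLsc:\ over \CGSs with full observation, established in~\cite{LM15}. Indeed, every concurrent game structure $\calC$ can be viewed as a \CGSO in which each relation $\sim_a$ is the identity on $Q$; with this choice, every strategy is trivially compatible, so the memoryless-compatible semantics of \ATLsc:\ collapses to the standard memoryless semantics of \ATLsc:\ over $\calC$. Hence any \PSPACE-hard instance of \ATLsc:\ model-checking over \CGSs is a \PSPACE-hard instance of the problem we consider here, via a trivial (logspace) reduction.

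The only potentially delicate point is confirming that $\widehat{\phis}^\emptyset$ can be \emph{produced} within polynomial space when feeding the \QCTL*\ solver (so that the composition really stays in \PSPACE, not just polynomial time on top of a polynomial-space oracle). This is not an obstacle, however: the translation $\phi \mapsto \widehat{\phi}^B$ is defined purely by structural recursion on $\phi$ with the coalition parameter $B$ adjusted by the $\Diamsm$ and $\Relax$ clauses, and every subformula produced has size bounded by the product of $|\phis|$ and the stated polynomial factor in $|\calC|$; a standard on-the-fly construction therefore lets us present the translated formula to the \QCTL*\ model checker within polynomial working space, preserving the overall \PSPACE\ bound.
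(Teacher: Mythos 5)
Your proof is correct and follows essentially the same route as the paper: membership via the translation of Proposition~\ref{prop-mc-uii0} (with empty context) into \QCTL* under the structure semantics, whose model checking is in \PSPACE, and hardness inherited from the perfect-observation memoryless setting, which embeds trivially by taking the identity observation relations. Your version merely spells out details the paper leaves implicit (the $B=\emptyset$ instantiation, the polynomial size of the translated formula, and the identity-relation embedding for the lower bound), and cites \cite{LM15} where the paper cites \cite{BDLM09}, which is an inessential difference.
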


\begin{proof}
\QCTL* model checking can be solved in polynomial space for the structure
semantics. This immediately gives a \PSPACE algorithm for our problem.
Conversely, \ATLsc0 model checking (with perfect observation)
is \PSPACE-hard~\cite{BDLM09}. This extends immediately to the
partial-observation setting. 
\end{proof}

\begin{remark}
Quantification over memoryless strategies could also be achieved using the
tree semantics, following the presentation of Section~\ref{sec3}. To~do~so,
it~suffices to label each state with its name (hence adding a few extra atomic
propositions) and to require that the labeling of the execution tree with
strategies satisfies that whenever some state~$s$ is labeled with some
move~$m_i$, then all the occurrences of~$s$ must be labeled with the same
mode~$m_i$. 

While this has little interest for model checking $\ATLsc:$ in terms of
complexity, it~shows that model checking remains decidable for the logic
involving quantification over both memoryful, bounded-memory and memoryless
strategies.
\end{remark}

\section{Satisfiability and partial observation}

In this section, we consider satisfiability checking: given a formula
$\Phi$, we look for a game structure~$\calC$, an equivalence~$\sim$ over
states, and a control state~$q_0$, such that $\calC,q_0 \models \Phi$. This
problem is undecidable for \ATLsc (and~\ATLsc*) in the classical setting of
perfect-observation games.

First note that considering partial observation makes the problem 
different: there exists formulas that are satisfiable under partial
observation, and not satisfiable for full observation.
Consider formula $\Phi \eqdef \AX (\Diams[a_1] \X
f) \et \non \Diams[a_1] \X\X f$. Clearly $\Phi$ is satisfiable when
considering games with partial observation: for~example, one~can consider the
turn-based structure in Figure~\ref{fig-ex-sat}, where $a_1$ plays in circle
nodes and $a_2$ plays in box nodes, and where $\sim$~is~$\equiv_{P}$: from
$q_0$, there is no $\sim$-strategy for $a_1$ ensuring the property $\X \X f$
(because from the strategy should play the same move after~$q_0q_1$
and~$q_0q_2$), but the subformula $\Diams[a_1] \X f$ holds for true in~$q_1$
and in~$q_2$.

\begin{figure}[!ht]
\centering
\begin{tikzpicture}
\tikzstyle{every node}=[rectangle]
\draw (1,2) node[style=carre,style=vert] (q0) {}
  node[above=3mm] {$\scriptstyle q_0$};
\draw[rounded corners=6mm,fill=black,opacity=.15] (3.4,2) |- (4.6,3.4) |-
  (3.4,.6) -- (3.4,2);
\draw (4,2.8) node[style=rond,style=bleu] (q1) {$P$}
  node[above=3mm] {$\scriptstyle q_1$};
\draw (4,1.2) node[style=rond,style=bleu] (q2) {$P$}
  node[below=3mm] {$\scriptstyle q_2$};
\draw (7,2.8) node[style=rond,style=vert] (q3) {}
  node[above=3mm] {$\scriptstyle q_3$};
\draw (7,1.2) node[style=rond,style=rouge] (q4) {$f$}
  node[below=3mm] {$\scriptstyle q_4$};

\tikzstyle{every node}=[rectangle]
\foreach \nod / \angle / \lab in
   {q3/0/,q4/0/}
    {\draw[arrows=-latex'] (\nod.\angle-30)
        .. controls +(\angle-30:6mm) and +(\angle+30:6mm)
        .. (\nod.\angle+30) node[pos=.4,right=2pt,below=1pt]
      {\hbox to 0pt{$\scriptstyle\listof{\lab}$\hss}}; }
\foreach \noda / \nodb / \lab / \dec in
    {
      q0/q1/1/+0pt,q0/q2/2/+3pt,q1/q3/1/+3pt,q2/q4/1/+3pt,q1/q4/2/+3pt,q2/q3/2/+3pt}
    {\draw[arrows=-latex'] (\noda) -- (\nodb) node[pos=.3,below=-1pt]
      {$\scriptstyle {\lab}$};}
\path (q0) -- +(-.6,.6) node {$\mathcal C$};
\end{tikzpicture}
\caption{The turn-based \CGS $\calC$ s.t.\ $q_0\models\Phi$}
\label{fig-ex-sat}
\end{figure}

But formula $\Phi$ is not satisfiable in the classical case: assume that $\AX
(\Diams[a_1] \X f)$ is satisfied in some state~$q$, then from every
$q$-successor $q'$, there is a strategy $f_{q'}$ for $a_1$ to ensure $\X f$.
Therefore the strategy for~$a_1$ consisting, from~$q$, to choose an arbitrary
move, and then to choose the strategy $f_{q'}$ for every possible
successor~$q'$ ensures the property~$\X\X f$.

From a decidability point of view, considering partial observation
does not make satisfiability problems to be simpler: in fact this
problem remains undecidable for \ATLsc. Furthermore, it is even worse
than in the classical setting: while the turn-based satisfiability is decidable when perfect information is assumed, it is undecidable when one considers the partial observation case.

\begin{theorem}
Satisfiability problems for \ATLsc*. \ATLsc,  $\ATLsc:$ and \ATLsc0 (with partial
observation) are undecidable even restricted to turn-based structures. 
\end{theorem}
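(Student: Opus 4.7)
The plan is to reduce from an undecidable problem, most naturally the halting problem for deterministic two-counter Minsky machines (or, equivalently, a recurrent tiling problem), by constructing for each such machine~$M$ a turn-based formula $\Phi_M$ whose satisfiability over partial-observation turn-based structures is equivalent to $M$ halting on empty input. I~build $\Phi_M$ in the most restricted of the four logics, \ATLsc0, and observe that its syntax is included in \ATLsc, $\ATLsc:$ and \ATLsc*, so the undecidability transfers to all four at once.

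The formula $\Phi_M$ uses standard \CTL/\ATL machinery to force every model to expose an infinite path representing the successive configurations of~$M$: each position is labelled with a tape symbol, a cell-index marker (which increments along the path and resets periodically so that the path carves out a two-dimensional ``tape by time'' grid), and possibly a control state. Local, head-adjacent transitions can be directly constrained by standard modalities. The genuine obstacle is enforcing the \emph{inertia} of cells far from the head, that is, the equality of cell~$i$ at time~$t+1$ with cell~$i$ at time~$t$; this is an inherently non-local property that perfect-observation \ATLsc cannot express---indeed, it is essentially what makes turn-based \ATLsc satisfiability decidable in the classical setting. Partial observation supplies the missing ingredient: introduce an auxiliary player~$a$ whose observation relation~$\sim_a$ is intended to collapse precisely the positions representing ``the same cell index in two consecutive configurations'', and include in $\Phi_M$ a conjunct of the form $\Diamsm[a]\, \G\, \mathit{Match}$, where $\mathit{Match}$ requires that the move chosen by~$a$ coincides with the symbol currently written in the cell. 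Since a compatible strategy must pick the same move at $\sim_a$-equivalent positions, this forces cell symbols to be preserved across configurations. A final conjunct asserting reachability of a halting control state completes~$\Phi_M$.

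The main technical obstacle is preventing spurious satisfying models from ``cheating'' by choosing an observation relation or a cell-index labelling inconsistent with the intended grid. Handling this requires enriching the atomic-proposition vocabulary so that any mismatch is detected by an \ATLsc0 subformula, typically by writing auxiliary subformulas that force the cell-index labels to increment correctly, and by using strategy contexts to cross-check that the observation collapses only positions that the formula itself has certified to be the same cell. Once this robustness is in place, the same reduction works under both memoryful and memoryless semantics, delivering undecidability uniformly for \ATLsc*, \ATLsc, $\ATLsc:$ and \ATLsc0 even on turn-based structures.
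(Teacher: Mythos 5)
There is a genuine gap, and it sits exactly where your plan puts all the weight. In the satisfiability problem the observation relations $(\sim_a)$ are \emph{part of the existentially quantified model}, so your conjunct $\Diamsm[a]\,\G\,\mathit{Match}$ cannot enforce inertia: a satisfying structure may simply take $\sim_a$ to be the identity, in which case every strategy is compatible and player~$a$ can match the locally written symbol position by position, with no constraint whatsoever relating cell~$i$ at time~$t$ to cell~$i$ at time~$t+1$. Making the observation coarser only \emph{restricts} the strategies available under an existential quantifier, so a positively occurring $\Diamsm[a]$ (or $\Diams[a]$) can never force the model to identify two positions; forcing indistinguishability requires \emph{inability} statements, i.e.\ negated strategy quantifiers in the style of the paper's example $\AX(\Diams[a_1]\X f)\et\non\Diams[a_1]\X\X f$, and designing such gadgets so that they carve out an unbounded grid and simultaneously let the positive conjunct transfer cell contents is precisely the hard content of the reduction — it is not supplied. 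Your proposed fix ("cross-check that the observation collapses \emph{only} positions the formula has certified") addresses the harmless direction (too-coarse relations), not the fatal one (too-fine relations, e.g.\ the identity).

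A second problem is the claimed transfer "to all four logics at once". A formula of \ATLsc0 is not a formula of \ATLsc or \ATLsc* ($\Diamsm$ is a different quantifier), and replacing $\Diamsm$ by $\Diams$ breaks your mechanism anyway: for memoryful compatible strategies, two histories are equivalent only if they have the \emph{same length}, whereas the positions "cell~$i$ at time~$t$" and "cell~$i$ at time~$t+1$" occur at different depths along the play, so compatibility imposes no constraint between them even under the intended $\sim_a$; even in the memoryless case it constrains states, not positions, so the structure would additionally have to reuse states in a way the formula must also force. Compare with the paper's proof, which sidesteps all of this: it reuses the Troquard--Walther reduction from $S5^n$ satisfiability~\cite{jelia2012-TW}, whose formula $\wideparen{\Phi}$ contains only Boolean connectives, strategy quantifiers and unnested $\X$, so every quantifier is evaluated at a single state and neither the observation relation nor the memoryless restriction affects its semantics — this is what makes one argument cover \ATLsc*, \ATLsc, $\ATLsc:$ and \ATLsc0 simultaneously — and turn-based models are then obtained via Theorem~\ref{thm-cgs2tb}. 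To salvage your direct Minsky/tiling encoding you would have to build the grid and the inertia constraint entirely out of inability gadgets that are insensitive to the memory assumption, which is a substantial construction you have not given.
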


\begin{proof}
We can mostly reuse the proof of Troquard and Walther~\cite{jelia2012-TW} (we have slighty modified in \cite{LM15}). The key idea of their proof is to reduce $S5^n$ satisfiability to \ATLsc satisfiability. Given an $S5^n$ formula $\Phi$, one can build an \ATLsc formula $\Diams[\overline{\emptyset}] \wideparen{\Phi}$ such that $\Phi$ is satisfiable iff $\Diams[\overline{\emptyset}]\wideparen{\Phi}$ is satisfiable. Without considering the details of the proof, one can just note that $\wideparen{\Phi}$ uses Boolean operators and strategies quantifiers $\Diams[a_i]$ (for $1\leq i \leq n$) and formulas of the form $\X P$: there is no $\Until$ modality and there is no nesting of $\X$. Therefore with such a formula, every strategies quantifier  is interpreted in a unique state $w$ and we only consider the moves done in this state $w$: adding an  equivalence  $\sim$ over states or considering memoryless strategies do not change the semantics of $\wideparen{\Phi}$.

Assume that $\Phi$ is satisfiable. From the proof of~\cite{jelia2012-TW}, one
can build a game structure $\calC$ satisfying
$\Diams[\overline{\emptyset}] \wideparen{\Phi}$. Moreover one can use the
reduction of Theorem~\ref{thm-cgs2tb} to get a tun-based games with partial
information.

Now assume that $\Diams[\overline{\emptyset}] \wideparen{\Phi}$ is
satisfiable. Therefore there exists a game structure~$\calC$ with an
equivalence~$\sim$ such that $\calC,
q \models \Diams[\overline{\emptyset}] \wideparen{\Phi}$.  
This structure is based on a set of agents $\{1,\ldots,k\}$ with $k\leq n$. 
And there is a strategy~$F$ for~$\Agt$ such that $\calC,
q \models_F \wideparen{\Phi}$ and $\wideparen{\Phi}$ may only   modify the
choices for players $a_1,\ldots,a_n$.  
If~$k>n$, we can replace the players $n+1$,\ldots, $k$ by their first move selected by   $F$ from $q$. Its gives a game structure based on $\Agt=\{a_1,\ldots,a_n\}$. And we can use the same construction of a corresponding $S5^n$ model for $\Phi$ as it is done in~\cite{LM15}, and as explained before, considering partial information does not change the construction since every strategy is applied from the same unique initial state of the game. 
\end{proof}


\section{Conclusion}
In this paper, we have proved that the model-checking of \ATLsc over games
with \emph{uniform} partial observation is decidable. It~would be interesting
to study whether our uniformity requirement on the observation of each player
in the game could be relaxed in order to be able to analyze richer models. One
possible direction is to look at games with hierarchical
information~\cite{BMV15}, but we currently could not find a way of extending
our algorithm to non-uniform observation.


\bibliographystyle{myalpha}
\bibliography{bibexport}

\end{document}